\newtheorem{assumption}{Assumption} 
\newtheorem{problem}{Problem} 
\newtheorem{remark}{Remark}
\newtheorem{lemma}{Lemma}
\newtheorem{corollary}{Corollary}
\newtheorem{example}{Example}
\newtheorem{defi}{Definition}
\newtheorem{theo}{Theorem}
\newtheorem{prop}{Proposition}
\newtheorem{procedure}{Procedure}
\newcounter{num}
\newcommand{\rnum}[1]{\setcounter{num}{#1} \roman{num}}
\algrenewcommand\algorithmicrequire{\textbf{Input:}}
\algrenewcommand\algorithmicensure{\textbf{Output:}}
\algnewcommand{\Break}{\textbf{break}}
\algnewcommand\algorithmicforeach{\textbf{for each}}
\newcommand*{\itemequation}[3][]{%
  \item
  \begingroup
    \refstepcounter{equation}%
    \ifx\\#1\\%
    \else  
      \label{#1}%
    \fi
    \sbox0{#2}%
    \sbox2{$\displaystyle#3\m@th$}%
    \sbox4{\@eqnnum}%
    \dimen@=.5\dimexpr\linewidth-\wd2\relax
    \ifcase
        \ifdim\wd0>\dimen@
          \z@
        \else
          \ifdim\wd4>\dimen@
            \z@
          \else 
            \@ne
          \fi 
        \fi
      \@latex@warning{Equation is too large}%
    \fi
    \noindent   
    \rlap{\copy0}%
    \rlap{\hbox to \linewidth{\hfill\copy2\hfill}}%
    \hbox to \linewidth{\hfill\copy4}%
    \hspace{0pt}
  \endgroup
  \ignorespaces 
}
\pgfplotsset{compat=1.10}
 \def\BibTeX{{\rm B\kern-.05em{\sc i\kern-.025em b}\kern-.08em
    T\kern-.1667em\lower.7ex\hbox{E}\kern-.125emX}}
\def\doi{10.1109/TAC.2025.3555949}
\title{\bf STLCCP: Efficient Convex Optimization-Based Framework for Signal Temporal Logic Specifications}
\author{Yoshinari~Takayama$^\dagger$, Kazumune Hashimoto$^\diamond$, Toshiyuki Ohtsuka$^\ast$
}
\affil{$^\dagger$ \textit{Laboratory of Signals and Systems (L2S), CentraleSupélec, University of Paris-Saclay}}
\affil{$^\diamond$ \textit{Graduate School of Engineering, Osaka University}}
\affil{$^\ast$ \textit{Graduate School of Informatics, Kyoto University}}
\date{}
\begin{document}

\maketitle
\thispagestyle{copyright}
\pagestyle{empty}
\begin{abstract}
Signal temporal logic (STL) is a powerful formalism for specifying various temporal properties in dynamical systems. However, existing methods, such as mixed-integer programming and nonlinear programming, often struggle to efficiently solve control problems with complex, long-horizon STL specifications. This study introduces \textit{STLCCP}, a novel convex optimization-based framework that leverages key structural properties of STL: monotonicity of the robustness function, its hierarchical tree structure, and correspondence between convexity/concavity in optimizations and conjunctiveness/disjunctiveness in specifications. The framework begins with a structure-aware decomposition of STL formulas, transforming the problem into an equivalent difference of convex (DC) programs. This is then solved sequentially as a convex quadratic program using an improved version of the convex-concave procedure (CCP). To further enhance efficiency, we develop a smooth approximation of the robustness function using a function termed the \textit{mellowmin} function, specifically tailored to the proposed framework. Numerical experiments on motion planning benchmarks demonstrate that \textit{STLCCP} can efficiently handle complex scenarios over long horizons, outperforming existing methods.
\end{abstract}

\section{Introduction}
Autonomous robotic systems, such as self-driving cars, drones, and cyber-physical systems, can have extensive real-world applications in transportation, delivery, and industrial automation in the near future. These systems must perform complex control tasks while ensuring safety, ideally with formal guarantees. To address this issue, researchers have developed formal specification languages that offer a unified and rigorous approach to expressing complex requirements for safe and reliable autonomous systems.
These languages include signal temporal logic (STL)~\cite{oded_stl}, which is well-suited for continuous real-valued signals. It offers a useful quantitative semantics called \textit{robustness function}~\cite{Fainekos2009-yn}, which quantifies how robustly a formula is satisfied. By maximizing this score, control input sequences can be synthesized robustly, and the resulting trajectories can formally satisfy the specification.
However, precisely formulating this optimization problem as a mixed-integer program (MIP) faces scalability issues with respect to the length of the horizon. To avoid this issue, recent work has focused on formulating the problem as nonlinear programs (NLP) by smooth approximations of the~$\operatorname{\max}$ and~$\operatorname{\min}$ operators in the robustness function~\cite{Pant_smooth, Hashimoto2022-xu,Gilpin2021-wv}. These NLP are then solved \textit{naively} through a sequential quadratic programming (SQP) method (or other gradient-based methods). Although these SQP-based methods are generally faster and more scalable than MIP-based methods, they can easily become infeasible and may not lead to the global optima due to the coarse iterative approximations. For instance, the traditional SQP method approximates the original problem as a quadratic program, which captures only up to the second derivative information of the original problem. This drawback becomes even more pronounced in STL-related problems, as the robustness function is coarsely approximated due to its high degree of non-convexity.

This study addresses these challenges by explicitly integrating key \textit{properties of STL} into the optimization framework. These features include the monotonicity property of the robustness function, its hierarchical tree structure, and correspondence between convexity/concavity in optimization and conjunctive/disjunctive operators in specifications. Leveraging these features, we propose a novel optimization framework, \textit{STLCCP}, which employs the convex-concave procedure (CCP)~\cite{Lipp2016-fa,Shen2016-sn} for iterative optimization. 

This study makes three primary contributions.
\begin{itemize}
\item[\rnum{1})] \textbf{Robustness Decomposition Framework:} We introduce an efficient decomposition method that reformulates the original problem into a specific difference of convex (DC) program. This approach reduces the non-convex components of the problem by leveraging the key properties of STL, enabling the effective use of CCP during the subsequent optimization step. 

\item[\rnum{2})] \textbf{Tree-Weighted Penalty CCP (TWP-CCP):} We develop an enhanced CCP variant that prioritizes critical constraints by exploiting the hierarchical structure of STL, which improves the optimization step.

\item[\rnum{3})] \textbf{Mellowmin-based Robustness Approximation:} To enhance optimization with gradient-based solvers, we propose using a novel smoothing function, the \textit{mellowmin} function. This robustness measure is both sound and asymptotically complete while providing a trade-off between smoothing accuracy and reduced concavity, making it appropriate for the proposed framework. Furthermore, we introduce a warm-start strategy to efficiently utilize this measure in practice.
\end{itemize}
These contributions form the foundation of the STLCCP framework, providing a scalable and robust solution for synthesizing control strategies in autonomous systems.

The STLCCP framework can handle \textit{any} temporal operators and nesting depth for STL specifications while \textit{ensuring} the formula's satisfaction, even when incorporating the constraint relaxation from TWP-CCP and the smooth approximation introduced via the mellowmin function. Moreover, it is computationally efficient: In contrast to the conventional SQP method, the proposed framework retains complete information of convex parts at each iteration, resulting in fewer approximations. In particular, when a linearity assumption is satisfied, it solves quadratic programs sequentially, with only disjunctive parts of the specification being linearized at each iteration. Numerical experiments show that the proposed method outperforms existing methods in terms of both robustness score and computational time on several benchmarks.  

\subsection{Related works}\label{subsection:relatedworks}
\textbf{Control under temporal logic specifications:} Traditional approaches to controller synthesis under temporal logic specifications have predominantly relied on model abstraction techniques. Although these methods are effective, they demand substantial domain expertise, posing challenges to automation and scalability. Optimization-based approaches have emerged as a compelling alternative. Early works~\cite{karaman2008,Karaman2008OptimalCO} concentrated on linear temporal logic (LTL) and metric temporal logic (MTL), while recent studies have extended these methodologies to STL, enabling their application to robotic systems with continuous state spaces~\cite{Raman2014-sa}. Most of these studies have focused on either developing novel robustness metrics~\cite{Pant_smooth,Gilpin2021-wv,Mehdipour2020-zm} or proposing an efficient technique for a specific class of temporal logic specifications~\cite{Lindemann2019-os}. More recently, research has sought to exploit the hierarchical structure of STL to improve MIP-based algorithms. For instance,~\cite{Kurtz2022-pe} introduces techniques to minimize integer variables, while ~\cite{Sun2022-fg} proposes novel encoding methods for multi-agent problems. Nevertheless, these methods target MIP formulations and do not address NLP-based approaches. Moreover, they do not fully exploit the key properties of STL, as this study aims to do (we refer to Section~\ref{subsec:discuss} for more detailed discussions).

\textbf{Control using convex optimizations:}
Convex optimization frameworks, such as CCP, have been applied in optimal control in several works, including~\cite{Debrouwere2013-ds_diehl,2011diehl}. More recently, these frameworks have been utilized in interdisciplinary research bridging formal methods and control theory. For instance,~\cite{Cubuktepe2021ConvexOF} addressed parameter synthesis problems in MDPs, while~\cite{WANG2022104965} derived invariant barrier certificate conditions to ensure unbounded time safety. Despite the growing popularity of convex optimization, its application to STL specifications remains relatively unexplored. Although several convex optimization-based frameworks for STL specifications exist, such as~\cite{maria,Mao2022-rq}, their focuses differ from our approach. For instance,~\cite{maria} emphasizes the decomposition of global STL specifications for multi-agent systems into local STL tasks, whereas our framework addresses controller synthesis for a single system. Similarly,~\cite{Mao2022-rq} employs a convex optimization-based approach for STL specifications, but their solver constrains updates within a trust region, differing from CCP-based approaches. Moreover, the approximated components of their method are fundamentally distinct from our approach.


In the conference paper~\cite{stlccp}, we explored a method for applying CCP to control problems with STL specifications. However, this previous work was limited to affine systems and predicate functions. In contrast, this paper generalizes the approach to encompass general DC function cases, including non-smooth and nonaffine functions. Additionally, it provides a comprehensive characterization of the systems and predicate functions to which the method can be applied. This study also introduces the TWP-CCP approach and the mellowmin function, both of which were absent in the earlier conference paper. Integrating TWP-CCP with the mellowmin function enables efficient use of gradient-based solvers while ensuring formula satisfaction, halving computation time and enabling consistent achievement of high robustness scores, regardless of the initial guesses.

The rest of the paper is structured as follows. Section~\ref{sec:preparation} introduces the preliminaries and the problem statement, presenting two key concepts that bridge convex optimization and formal specifications: the reversed robustness function and its tree structure. Section~\ref{section:decomposition} describes the robustness decomposition method, reformulating the optimization problem into a non-smoothed DC form. Section~\ref{section:CCP} focuses on the optimization step and examines the properties of standard CCP and the proposed TWP-CCP. Section~\ref{section:smooth} explores the differentiable case, proposing the mellowmin robustness measure. Section~\ref{section:example} presents numerical experiments to validate the approach. Finally, Section~\ref{section:conclusion} discusses how the framework leverages the key properties of STL and concludes the paper.

\section{Preliminaries}\label{sec:preparation}
\paragraph*{Notations}~$\mathbb{R}$, $\mathbb{R}_{>0}$, and $\mathbb{N}$ represent the sets of real numbers, positive real numbers, and natural numbers, respectively. 
Other similar sets are defined analogously. 
A signal, or trajectory, $\boldsymbol{x}:\mathbb{T}\to\mathbb{R}^{n}$, is a function that maps each time point $t$ in the time domain $\mathbb{T} \subseteq \mathbb{N}$ to an $n$-dimensional real-valued vector $x_t \in \mathbb{R}^n$. For simplicity, we consider a discretized time domain with $H\in\mathbb{N}$ time steps (horizons), denoted as $\mathbb{T} = \{0, \ldots, H\}$, and represent a trajectory over $\mathbb{T}$ as the sequence~$\boldsymbol{x} = \{x_0, x_1, \ldots, x_H\}$. A time interval~$[t_1,t_2] \subseteq \mathbb{T}$ denotes a segment of the time domain, with~$0 \leq t_1 \leq t_2$. 

\subsection{Signal temporal logic (STL)}\label{subsec:stl}

STL is a predicate logic used to specify properties of trajectories~$\boldsymbol{x}$. The syntax of STL is~\cite{Raman2014-sa}:
\begin{align}\label{eq;stl}
\varphi:=\mu| \vee \varphi | \wedge \varphi|\square_{\left[t_1, t_2\right]} \varphi| \Diamond_{\left[t_1, t_2\right]} \varphi \mid \varphi_1 \boldsymbol{U}_{\left[t_1, t_2\right]} \varphi_2,
\end{align}
where~a \textit{predicate} is expressed as~$\mu=(g^\mu(x_t) \leq 0)$, with~$g^\mu:\mathbb{R}^{n}\rightarrow\mathbb{R}$ referred to as the \textit{predicate function}. The symbol~$|$ stands for OR and the definition is recursive.
In addition to the standard boolean operators~$\wedge$ and~$\vee$, the STL also incorporates temporal operators~$\square$~(\textit{always}),~$\Diamond$~(\textit{eventually}), and~$\boldsymbol{U}$~(\textit{until}). Here,~$\wedge$ and~$\square$ are referred to as conjunctive operators. Similarly,~$\vee$ and~$\Diamond$ are referred to as disjunctive operators. 
Any STL formula can be brought into negation normal form~(NNF), where the negations are solely present in front of the predicates~\cite{sadradd_monotone,Sadraddini2015-dn}. This study deliberately omits negations from the STL syntax for later explanation. For simplicity, we refer to this negation-free STL as STL.

Each STL formula is associated with its quantitative semantics, known as the \textit{robustness function}. This function indicates how well the STL specification is satisfied or violated. For the definition of the traditional robustness function at time~$t$ for a given trajectory~$\boldsymbol{x}$ and formula~$\varphi$, denoted as~$\rho^\varphi(\boldsymbol{x}, t)\in\mathbb{R}$, we refer readers to~\cite[Sec.~2-C]{Raman2014-sa}. Here, we adopt the same robustness measure but redefine it to align with our convex optimization framework. Specifically, we introduce a reversed version of the robustness function, denoted as~\( \rho^\varphi_{\text{rev}}(\boldsymbol{x}, t)\in\mathbb{R} \), which is recursively defined by the following equations:

\begin{defi}\textit{(STL Reversed robustness)}\label{def:revrobustness}
\begin{subequations}\label{eq:robustnessrev}
\begin{align}
    & \rho_{\text{rev}}^\mu(\boldsymbol{x}, t)=g^\mu\left(x_t\right) \label{eq:reversed_predicate}\\
    & \rho_{\text{rev}}^{\varphi_1 \wedge \varphi_2}(\boldsymbol{x}, t)=\max \left(\rho_{\text{rev}}^{\varphi_1}(\boldsymbol{x}, t), \rho_{\text{rev}}^{\varphi_2}(\boldsymbol{x}, t)\right) \label{eq:reversed_conj}\\
    & \rho_{\text{rev}}^{\varphi_1 \vee \varphi_2}(\boldsymbol{x}, t)=\min \left(\rho_{\text{rev}}^{\varphi_1}(\boldsymbol{x}, t), \rho_{\text{rev}}^{\varphi_2}(\boldsymbol{x}, t)\right)\\
    & \rho_{\text{rev}}^{\square_{\left[t_1, t_2\right] }\varphi}(\boldsymbol{x}, t)=\max _{t^{\prime} \in\left[t+t_1, t+t_2\right]}\left(\rho_{\text{rev}}^{\varphi}\left(\boldsymbol{x}, t^{\prime}\right)\right)\\
    & \rho_{\text{rev}}^{\Diamond_{\left[t_1, t_2\right] }\varphi}(\boldsymbol{x}, t)=\min _{t^{\prime} \in\left[t+t_1, t+t_2\right]}\left(\rho_{\text{rev}}^{\varphi}\left(\boldsymbol{x}, t^{\prime}\right)\right) \label{eq:reversed_eventually}\\
    & \hspace{-0.2cm} \rho_{\text{rev}}^{\varphi_1 \boldsymbol{U}_{\left[t_1, t_2\right]} \varphi_2}(\boldsymbol{x}, t)=\max _{t^{\prime} \in\left[t+t_1, t+t_2\right]}\Bigl(\min \Bigl(\rho_{\text{rev}}^{\varphi_1}(\boldsymbol{x}, t^{\prime}), \nonumber \\
    &  \min _{t^{\prime \prime} \in[t+t_1, t^{\prime}]}\bigl(\rho_{\text{rev}}^{\varphi_2}(\boldsymbol{x}, t^{\prime \prime})\bigr)\Bigr)\Bigr) \label{eq:reversed_until}
\end{align}
\end{subequations}
\end{defi}
In Definition~\ref{def:revrobustness}, each~$\max$ and~$\min$ function in~\eqref{eq:reversed_conj}--\eqref{eq:reversed_until} is swapped compared to those in the traditional robustness definition in~\cite{Raman2014-sa}. Using this reversed robustness, the traditional robustness can then be obtained as~$\rho^{\varphi}(\boldsymbol{x},t) = -\rho_{\text{rev}}^{\varphi}(\boldsymbol{x},t)$. For simplicity, we abbreviate $\rho^{\varphi}(\boldsymbol{x},0)$ as $\rho^{\varphi}(\boldsymbol{x})$.
The horizon needed to calculate the robustness of a formula, known as \textit{formula length}, can be calculated recursively as in~\cite{Sadraddini2015-dn}. 
The robustness~$\rho^{\varphi}(\boldsymbol{x}) (= -\rho_{\text{rev}}^{\varphi}(\boldsymbol{x}))$ is \textit{sound} in the sense that 
\begin{equation}\label{eq:sound}
    \rho_{\text{rev}}^{\varphi}(\boldsymbol{x}) \leq 0 \implies   \boldsymbol{x} \vDash \varphi,
\end{equation} 
and also \textit{complete} in the sense that 
$\rho_{\text{rev}}^{\varphi}(\boldsymbol{x}) > 0 \implies   \boldsymbol{x} \nvDash \varphi$.\footnote{Throughout this paper, we use \textit{soundness} to describe when a robustness value can guarantee the satisfaction of a formula, as in~\eqref{eq:sound}. On the other hand, we use \textit{completeness} to describe when a robustness value can also guarantee the violation of a formula. These terminologies are the same as~\cite{Gilpin2021-wv} but are different from~\cite{Mehdipour2020-zm}.}
\begin{remark}\label{rem:redundunt}
Although this redefinition may seem unnecessary, we introduce it for several reasons. First, it establishes a clear correspondence between temporal logic operator types (conjunctive or disjunctive) and optimization constraint curvatures (convex or concave). 
Table~\ref{tab:connection} summarizes this correspondence. It shows that in the reversed version, satisfying conjunctive operators leads to satisfying a convex constraint. 
For instance, consider a formula~$\varphi = \varphi_1 \wedge \varphi_2$, where~$\varphi_1$ and~$\varphi_2$ are linear predicates. When expressed in terms of the reversed robustness~$\rho_{\text{rev}}^{\varphi}$, this correspondence implies that the conjunction~$\wedge$ corresponds to~$\max$, and the constraint becomes~$\max(g^{\varphi_1}, g^{\varphi_2}) \leq 0$. In contrast, using the traditional robustness, the constraint is~$-\min(-g^{\varphi_1}, -g^{\varphi_2}) \leq 0$, where such a direct connection is not immediately apparent. Second, this definition avoids maximization formulations, which are not commonly used in convex optimization, reducing confusion. This helps in intuitively understanding the connection between temporal logic and convex optimization. Finally, employing this reversed definition simplifies the decomposition procedure, which will be discussed in Section~\ref{section:decomposition}.
\begin{table}[ht]
\begin{center}
\normalsize
\caption{Correspondance in the reversed robustness.}\label{tab:connection}
\begin{tabular}{|c||c||c||c|}
\hline
 node-type  & operator & functions & constraints \\ \hline \hline
 conjunctive &~$\wedge, \square$ & max  & convex \\ \hline
 disjunctive &$\vee,\Diamond$ & min & concave \\ \hline
\end{tabular}
\end{center}
\end{table}
\normalsize 
\end{remark}

\subsection{Tree structure}\label{subsec:tree_structure}
STL formulas have a tree structure. We define \textit{robustness tree}~$\mathcal{T}^{\varphi}$ for the reversed robustness functions~$\rho_{\text{rev}}^{\varphi}$:

\begin{defi}\textit{(Robustness tree)}\label{def:tree}
A robustness tree~$\mathcal{T}^{\varphi}$ is a tuple~$\left(\mathcal{A},\mathcal{B}, \mathcal{C}\right)$ corresponding to~$\rho_{\text{rev}}^{\varphi}$, where
\begin{enumerate}
\item[\rnum{1})]~$\mathcal{A}\in\{\max, \min\}$ is the type of the top node of~$\mathcal{T}^{\varphi}$, which corresponds to the outermost operator of~$\rho_{\text{rev}}^{\varphi}$.
\item[\rnum{2})]~$\mathcal{B}=\left[\mathcal{T}^{\varphi_1}, \mathcal{T}^{\varphi_2}, \ldots, \mathcal{T}^{\varphi_T}\right]$ is a list of~$T$ first-level subtrees, called \textit{child subtrees}, sprouted from the top node, which corresponds to each argument of~$\rho_{\text{rev}}^{\varphi}$.
\item[\rnum{3})]~$ \mathcal{C}=\left[t^{\varphi_1}, t^{\varphi_2}, \ldots, t^{\varphi_T}\right]$ is a list where each element~$t^{\varphi_i}$ represents the time step associated with the first-level subtree $\mathcal{T}^{\varphi_i}\in \mathcal{B}$. For when the top node of $\mathcal{T}^{\varphi}$ is associated with a temporal operator, $t^{\varphi_i}$ corresponds to each time step within its time interval, while for Boolean operators, all $t^{\varphi_i}$ are $0$.
\end{enumerate}
\end{defi}
The leaves of a robustness tree $\mathcal{T}^{\varphi}$, which are the bottom nodes, correspond to the predicate functions in~$\rho_{\text{rev}}^{\varphi}$. Each subtree of $\mathcal{T}^{\varphi}$ is itself a robustness tree, and every node in $\mathcal{T}^{\varphi}$, except the leaves, serves as the top node of a subtree. Given $\rho_{\text{rev}}^\varphi$ (or the formula $\varphi$ itself), the tree $\mathcal{T}^{\varphi}$ and all of its subtrees can be constructed recursively. An example of a robustness tree is shown in Fig.~\ref{fig:robustnesstree}.

\begin{figure}[!ht]
    \centering
    \includegraphics[keepaspectratio, scale=0.5]{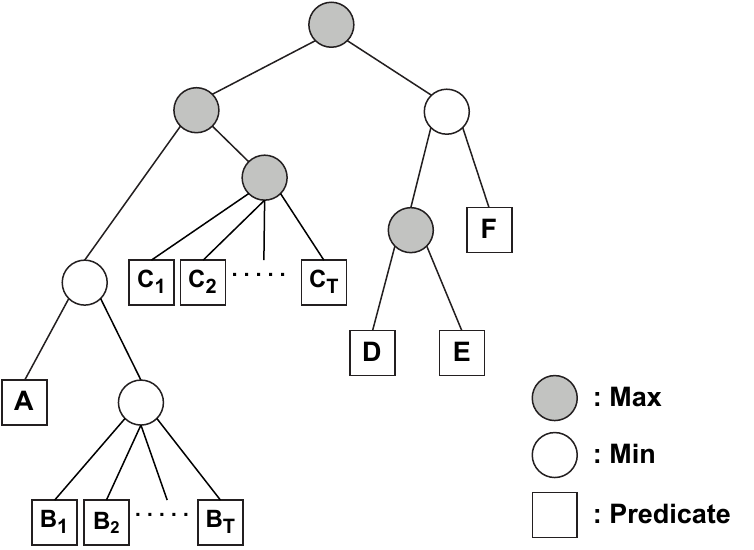}
    \caption{An example of the robustness tree for formula \protect{$\varphi=((A\vee\Diamond_{[1, T]}B)\wedge \square_{[1, T]}C) \wedge((D\wedge E)\vee F)$}, where~$A,B_1,...,B_T,C_1,...,C_T,D,E,F$ denote predicates. The grey circles represent the \protect{$\max$} operators, whereas the white circles represent the \protect{$\min$} operators. The square boxes represent predicate functions~$g^A,...,g^F:\mathbb{R}^{n}\rightarrow\mathbb{R}$ associated with each predicate.}
    \label{fig:robustnesstree}
\end{figure}
Each~$\max$-type node (or~$\min$-type node) in~$\mathcal{T}^{\varphi}$ corresponds to an operator~$\wedge$ or~$\square$ (or~$\vee$ and~$\Diamond$) in the formula~$\varphi$, as illustrated in Table~\ref{tab:connection}. Here, the operators~$\wedge$ and~$\square$ (or~$\vee$ and~$\Diamond$) are treated similarly. However, the ``until" operator~$\boldsymbol{U}$ does not correspond to a single node as its robustness function in~\eqref{eq:reversed_until} is a conjunction of~$\max$ and~$\min$ functions. Specifically, it corresponds to a robustness tree whose top node is of~$\max$-type, with child nodes that are~$\min$-type, and those in turn may be further subdivided into~$\min$-type nodes or predicates.

Here, we denote the descendant trees of~$\mathcal{T}^\varphi$ whose top node is of~$\min$-type as~$\mathcal{T}^\varphi_{\vee,i}$, with~$i \in \{1, \ldots, N_\vee^\varphi\}$ and~$N_\vee^\varphi\in\mathbb{N}$ representing the total number of such descendant trees. Since the top node of each~$\mathcal{T}^\varphi_{\vee,i}$ corresponds to a disjunctive operator in the STL formula~$\varphi$, we refer to them as \textit{disjunctive nodes} of~$\mathcal{T}^\varphi$. Each leaf (predicate) in~$\mathcal{T}^\varphi$ is denoted as~$\mathcal{L}^\varphi_i$, where~$i \in \{1, \ldots, N_p^\varphi\}$ and~$N_p^\varphi\in\mathbb{N}$ denotes the total number of predicates in~$\varphi$. The set of all predicate functions in~$\varphi$ is denoted as~$\mathcal{M} = \{\mu_1, \ldots, \mu_{N_p^\varphi} \}$. Furthermore, we classify the leaves based on their parent nodes. Leaves with~$\max$-type parent nodes are denoted as~$\mathcal{L}^\varphi_{\wedge,i}$, with~$i \in \{1, \ldots, N_{p \wedge}^\varphi\}$, where~$N_{p \wedge}^\varphi\in\mathbb{N}$ represents the number of such leaves. Similarly, leaves with~$\min$-type parent nodes are denoted as~$\mathcal{L}^\varphi_{\vee,i}$, with~$i \in \{1, \ldots, N_{p \vee}^\varphi\}$, where~$N_{p \vee}^\varphi\in\mathbb{N}$ is the count of such leaves. Finally, it holds that the total number of predicates satisfies the relationship~$N_p^\varphi = N_{p \wedge}^\varphi + N_{p \vee}^\varphi$.

\begin{example}
Consider the formula shown in Fig.\ref{fig:robustnesstree}. For the robustness tree~$\mathcal{T}^\varphi = (\mathcal{A} = \max, \mathcal{B}=[\mathcal{T}^{(A\vee\Diamond_{[1, T]}B)\wedge \square_{[1, T]}C},\mathcal{T}^{(D\wedge E)\vee F}], \mathcal{C}=\left[0,0\right])$, one of its disjunctive subtrees,~$\mathcal{T}^{\Diamond_{[1, T]} B}$, has~$T$ child subtrees with a time step list~$\left[0, 1, \ldots, T-1\right]$. In total, there are~$N^{\varphi}_{\vee}=3$ disjunctive subtrees. Among all~$N^{\varphi}_{p}=2T+4$ predicates, those with~$\min$-type parent nodes number~$N^{\varphi}_{p \vee}=T+2$.
\end{example}
\begin{remark}
The authors of~\cite{Kurtz2022-pe,Sun2022-fg,Leung2019-back} also define a tree for STL formulas. However, they define the tree for the formula~$\varphi$ itself, whereas we define it for the reversed robustness function~$\rho_{\text{rev}}^{\varphi}$ rather than directly for~$\varphi$.
\end{remark}

\subsection{Convex-concave procedure (CCP)}\label{subsec:ccp}
The CCP~\cite{Shen2016-sn} is an iterative algorithm to find a local optimum of a class of optimization problems called DC programs, which are composed of functions called DC functions:
\begin{defi}\textit{\cite{Lipp2016-fa}} \label{def:dcf}
A function~$p:\mathbb{R}^h\rightarrow\mathbb{R}$ is a DC function if there exist convex functions~$q, r: \mathbb{R}^h \rightarrow \mathbb{R}$ such that~$p$ is expressed as the difference between~$q$ and~$r$, that is,~$p(z)=q(z)-r(z),z\in\mathbb{R}^h$.
\end{defi}
\begin{defi}\textit{\cite{Lipp2016-fa}}
A program is a DC program if written as
\begin{subequations}\label{eq:dc}
\begin{align}
&\min _{z} \quad p_0(z) \label{eq:dc1}\\
&\text{s.t. } \quad p_i(z)\leq 0, \quad i\in\{1, \ldots, m\},\label{eq:dc2}
\end{align}
\end{subequations}
where~$z\in\mathbb{R}^h$ is the vector of~$h$ optimization variables and~$p_i=q_i-r_i: \mathbb{R}^h \rightarrow \mathbb{R},i\in\{0, \ldots, m\}$ are DC functions. Note that constraints in \eqref{eq:dc2} can include equality constraints
\begin{equation}\label{eq:dc3}
q_i(z)=r_i(z)
\end{equation}
by encoding them as a pair of inequality constraints
$q_i(z)-~r_i(z)~\leq~0$ and~$r_i(z)-q_i(z) \leq 0.$
\end{defi}
The class of DC programs is broad; for example, any~$C^2$ function can be expressed as a DC function~\cite{Hartman1959p707}. Moreover, it also includes some non-differentiable functions. 

The CCP involves two steps: a majorization step, where concave terms are replaced with a convex upper bound, and a minimization step, where the resulting convex problem is solved. This study adopts the simplest form of majorization, which is linearization at the current point at each iteration. Specifically, the CCP linearizes all \textit{(nonaffine) concave functions} in the following forms: \rnum{1}) function~$r_0(z)$ in the cost~\eqref{eq:dc1}; \rnum{2}) functions~$r_i(z)$ in inequality constraints~\eqref{eq:dc2};
\rnum{3}) functions~$q_i(z)$ \textit{ and }~$r_i(z)$ in equality constraints~\eqref{eq:dc3}.

Although SQP algorithms have been popular for their per-iteration solving as quadratic programs, advances in algorithms for tackling more general convex programs, such as second-order cone programs (SOCPs), have made it possible to exploit additional problem structure and potentially achieve better solutions through CCP. Although the computation time per iteration in CCP can be longer due to solving general convex programs, CCP can outperform SQP in terms of total computation time when the number of iterations required for convergence is sufficiently small.

\subsection{Problem formulation}
We consider a nonlinear system
\begin{equation}\label{eq:system}
x_{t+1} = f(x_t,u_t), 
\end{equation}
where~$x_t \in \mathcal{X} \subseteq \mathbb{R}^n$ and~$u_t \in \mathcal{U} \subseteq \mathbb{R}^m$ are the state and control input at time~$t$, and~$f: \mathcal{X} \times \mathcal{U}\rightarrow \mathcal{X}$. The state and input constraints~$\mathcal{X}$ and $\mathcal{U}$ are defined by the inequalities~$h_x(x_t)\leq 0$ and~$h_u(u_t)\leq 0$, respectively, where $h_x : \mathbb{R}^n \to \mathbb{R}^{l_{x_t}}$, $h_u : \mathbb{R}^m \to \mathbb{R}^{l_{u_t}}$, and $l_{x_t}, l_{u_t}$ correspond to the number of state and input constraints, respectively, for $t\in\{1,...,H\}$.

The control problem to be addressed, denoted as~$\mathcal{P}$, is defined in Problem~\ref{problem}.
\begin{problem}\label{problem}
Given an initial state~$x_0$, systems~\eqref{eq:system}, and STL specification~$\varphi$, find a trajecotry~$\boldsymbol{x}$ that robustly satisfies~$\varphi$:
\begin{subequations}\label{eq:moto}
\begin{align}
\min _{\boldsymbol{x}, \boldsymbol{u}} \hspace{0.5em}&\rho_{\text{rev}}^{\varphi}(\boldsymbol{x}) \label{eq:min_x1}\\
\text { s.t. } & x_{t+1}=f(x_t,u_t), \quad t\in\{1,...,H\}\label{eq:min_sys1}\\
& x_t \in \mathcal{X}, u_t \in \mathcal{U}, \quad t\in\{1,...,H\} \label{eq:min_xu1}\\
& \rho_{\text{rev}}^{\varphi}(\boldsymbol{x})\leq 0
\label{eq:min_xuu1}
\end{align}
\end{subequations}
\end{problem}

The proposed framework adopts the following two assumptions regarding the system, state/input constraints, and STL specifications throughout this paper. In Section~\ref{subsec:result}, however, we introduce additional assumptions—Assumptions~\ref{assum:major} and~\ref{assum:linear}—which provide useful properties of the proposed framework. While the predicate functions are restricted, there are no limitations on the types of temporal operators or the nesting depth, allowing the general syntax in~\eqref{eq;stl}.
\begin{assumption}\label{assum:dc}
\textit{(System and state/input constraints)} The functions~$f,h_x,h_u$ are all DC functions (with respect to their arguments: both~\(x_t, u_t\) for~\(f\),~\(x_t\) for~\(h_x\), and~\(u_t\) for~\(h_u\)).
\end{assumption}
\begin{assumption} \label{assum:dcpredicate}
\textit{(STL predicate functions)} If the parent node of predicate~$\mu_i \in \mathcal{M}$ is of~$\min$-type, the predicate function~$g^{\mu_i}$ is a concave function (with respect to its arguments $x_t$). However, if the parent node of predicate~$\mu_i$ is of~$\max$-type, the predicate function~$g^{\mu_i}$ is a general DC function.
\end{assumption}

Assumption~\ref{assum:dcpredicate} is a specific requirement of the proposed approach, necessary to reformulate the program as a DC program.
We provide an example of this assumption below.
\begin{example}
Furthermore, consider the formula in Fig.~\ref{fig:robustnesstree}.
The predicate functions of the~$N^{\varphi}_{p \vee}$ predicates, that is~$g^A,g^{B_1},...,g^{B_T},g^F$, can be concave functions. However, the other~$N^{\varphi}_{p \wedge}$ predicate functions, that is~$g^{C_1},...,g^{C_T},g^D,g^E$, can be general DC functions.  
\end{example}
\begin{remark}\label{rem:assumpredicate}
Assumption~\ref{assum:dcpredicate} depends only on the parent node and does not extend to the grandparent node or any other ancestral nodes. This assumption accommodates a broader class of predicate functions than affine functions, including non-differentiable ones. Although this study mainly focuses on differentiable functions for simplicity of notation, the results remain valid for non-differentiable cases, where the gradient at a point is replaced by a subgradient at that point. 
\end{remark}
\begin{remark}\label{rem:omitqudratic}
Although we omit quadratic cost terms (such as those in the LQR setting), output equations, and other variables for simplicity in \eqref{eq:moto}, the proposed method can still be applied to such cases if the resulting program is a DC program. We could also omit Eq.~\eqref{eq:min_xuu1}. 
In such cases, the program maximizes robustness without the guarantee of the formula's satisfaction.
\end{remark}

\section{Robustness Decomposition}\label{section:decomposition}

The decomposition aims to turn the problem into a DC program to apply a CCP-based algorithm.
Achieving this requires determining the convexity or concavity of each function in the program~\eqref{eq:moto}. Among these, the reversed robustness function~$\rho_{\text{rev}}^{\varphi}$—a composite of~$\max$ and~$\min$ functions—is the only component with an indeterminate curvature (convex or concave). To address this issue, we decompose the robustness function into convex and concave functions \textit{equivalently}.\footnote{In this paper, two optimization problems are called \textit{equivalent} if their (global) optimal objective values and solutions can be obtained from those of the other. This is a similar usage as in \protect{\cite[p.257]{El_Ghaoui2015-eb}} and \protect{\cite[p.130]{Boyd2004-kn}}.}

\subsection{Basic idea of the proposed robustness decomposition}\label{subsection:first_step_of_}
We begin by demonstrating that program~\eqref{eq:moto} can be equivalently transformed into a DC program using a straightforward robustness decomposition procedure, under a mild additional assumption on predicate functions:
\begin{assumption} \label{assum:maxmin}
\textit{(STL predicate functions)} If the parent node of predicate~$\mu_i\in \mathcal{M}$ is of~$\max$-type, the predicate function~$g^{\mu_i}$ is a convex function. 
\end{assumption}

\begin{prop}\label{theo:dcprogram}
Program~\eqref{eq:moto} can be equivalently transformed into a DC program under Assumptions~\ref{assum:dc}--\ref{assum:maxmin}.
\end{prop}
\begin{proof}
We reformulate the program by equivalently decomposing the robustness function. The outermost operator of the robustness function~$\rho_{\text{rev}}^{\varphi}$ is either~$\max$ and~$\min$. 
In both cases, we replace the cost function with variable~$s_\xi \in\mathbb{R}$. For instance, if the outermost operator is the~$\max$, that is,~$\rho_{\text{rev}}^{\varphi} = \max(\rho_{\text{rev}}^{\varphi_1},\rho_{\text{rev}}^{\varphi_2},...,\rho_{\text{rev}}^{\varphi_r})$, we put~$s_\xi$ in the cost and add the constraint
\begin{equation}\label{eq:basic:max}
\max(\rho_{\text{rev}}^{\varphi_1},\rho_{\text{rev}}^{\varphi_2},...,\rho_{\text{rev}}^{\varphi_r})=s_\xi.
\end{equation}
In the~$\min$ case, we add 
\begin{equation}\label{eq:basic:min}
\min(\rho_{\text{rev}}^{\varphi_1},\rho_{\text{rev}}^{\varphi_2},...,\rho_{\text{rev}}^{\varphi_r})=s_\xi,
\end{equation}
where the formulas $\varphi_1, \ldots, \varphi_r$ in the constraints~\eqref{eq:basic:max} and~\eqref{eq:basic:min} are not necessarily predicates.
Subsequently, for both constraints~\eqref{eq:basic:max} and \eqref{eq:basic:min}, we verify whether each argument is a predicate function or not. If an argument of~$\max$ function in~\eqref{eq:basic:max} is not a predicate function, say~$\rho_{\text{rev}}^{\varphi_i}$, we replace it with a variable~$s_{\text{new}}\in\mathbb{R}$ to form~$\max(\rho_{\text{rev}}^{\varphi_1}, ..., s_{\text{new}}, ..., \rho_{\text{rev}}^{\varphi_r})$. We then introduce the equality constraint~$\rho_{\text{rev}}^{\varphi_i} = s_{\text{new}}$, which results in form of either~\eqref{eq:basic:max} or~\eqref{eq:basic:min}. A similar replacement is applied for the~$\min$-type constraint in~\eqref{eq:basic:min}. These replacements are applied to all non-predicate arguments repeatedly until the resulting constraints take the form of~\eqref{eq:basic:max} or~\eqref{eq:basic:min}, with every argument of the~$\max$ and~$\min$ functions being either a predicate function (or introduced variables, such as~$s_{\text{new}}$). These equality constraints are then encoded as two inequalities, as described earlier for equality~\eqref{eq:dc3}.  

The resulting program is a DC program, derived using the \textit{composition rule}~\cite[Section 3.2.4]{Boyd2004-kn} and the \textit{monotonicity} properties of the~$\max$ and~$\min$ functions. Specifically, as the~$\max$ function is increasing, its composite function is convex when all arguments are convex. Similarly, as the~$\min$ function is increasing, its composite function is concave when all arguments are concave. Assumptions~\ref{assum:dcpredicate} and~\ref{assum:maxmin} guarantee these conditions by ensuring that the arguments of these functions are either predicate functions or introduced variables. Therefore, all composite functions in~\eqref{eq:basic:max} and~\eqref{eq:basic:min} are either convex or concave, completing the proof.
\end{proof} 
The proof above demonstrates a basic procedure for reformulating Problem~\ref{problem} into a DC problem. While this procedure effectively illustrates the fundamental idea of robustness decomposition, it relies on Assumption~\ref{assum:maxmin} and does not fully exploit the problem's inherent structure. Specifically, it increases the number of concave constraints unnecessarily, which can result in more iterations or suboptimal local solutions. This issue arises because the new constraints of the form~\eqref{eq:basic:max} are expressed as \textit{equalities} and are then encoded as two inequalities in CCP. Due to the nonaffine nature, even the convex~$\max$-type constraint in \eqref{eq:basic:max} introduces a concave inequality, and more for the concave~$\min$-type constraint in \eqref{eq:basic:min}. These limitations motivate the introduction of a refined version of the reformulation procedure in the subsequent subsections, which addresses these drawbacks by reducing the non-convex components of the resulting program.

\begin{remark}
The decomposition of~$\rho_{\text{rev}}^{\varphi}$ corresponds to a decomposition of the robustness tree~$\mathcal{T}^{\varphi}$ from top to bottom, in accordance with the tree structure.
\end{remark}

\subsection{Fewer non-convex parts via epigraphic reformulations}\label{subsec:fewer}

The refined procedure consists of two parts. The first part involves decomposing the robustness function in the objective into a series of constraints. The second part involves recursively decomposing these constraints into simpler ones.

\begin{procedure}\label{procedure} \textbf{(Part i. From cost to constraints)} The introduced constraints vary depending on whether the outermost operator of the robustness function~$\rho_{\text{rev}}^{\varphi}$ is a~$\max$ or~$\min$.

\textbf{(i. max)} For when the outermost operator of~$\rho_{\text{rev}}^{\varphi}$ is max, i.e.,~$\rho_{\text{rev}}^{\varphi} = \max(\rho_{\text{rev}}^{\varphi_1},\rho_{\text{rev}}^{\varphi_2},...,\rho_{\text{rev}}^{\varphi_r})$, we introduce a new variable~$s_\xi\in\mathbb{R}$, and reformulate the program as follows:
\begin{subequations}\label{eq:max_trans}
\begin{align}
\min _{\boldsymbol{x}, \boldsymbol{u},s_\xi}\hspace{0.2em} &s_\xi \label{eq:min_xi}\\
\text { s.t. } &~\eqref{eq:min_sys1} ,\eqref{eq:min_xu1}\label{eq:x}\\
& s_\xi \leq 0   \label{eq:xi_0}\\
& \rho_{\text{rev}}^{\varphi_1}\left(\boldsymbol{x}\right) \leq s_\xi \dots \rho_{\text{rev}}^{\varphi_r}\left(\boldsymbol{x}\right)
\leq s_\xi \label{eq:xl}
\end{align}
\end{subequations}

\textbf{(i. min)} For when the outermost operator of~$\rho_{\text{rev}}^{\varphi}$ is~$\min$, that is,~$\rho_{\text{rev}}^{\varphi} = \min(\rho_{\text{rev}}^{\varphi_1},\rho_{\text{rev}}^{\varphi_2},...,\rho_{\text{rev}}^{\varphi_r})$, we reformulate the program similarly to~\eqref{eq:max_trans} using a new variable~$s_\xi$. However, rather than the constraint~\eqref{eq:xl}, we have 
\begin{align}\label{eq:xll}
\min (\rho_{\text{rev}}^{\varphi_1} ,\dots ,\rho_{\text{rev}}^{\varphi_r})\leq s_\xi  
\end{align}
Before we describe the second part of the procedure, the following lemma shows the equivalence of \textbf{(Part i)}, regardless of the type of the outermost operator.
\begin{lemma} \label{lem;eqcost}
The procedure in \textbf{(Part i)} equivalently transforms the original program~\eqref{eq:moto}.
\end{lemma}
\begin{proof}
The set of inequalities~\eqref{eq:xl} is equivalent to 
\begin{equation}\label{eq:eq1}
\max(\rho_{\text{rev}}^{\varphi_1},\rho_{\text{rev}}^{\varphi_2},...,\rho_{\text{rev}}^{\varphi_r}) \leq s_\xi.
\end{equation}
We can use a similiar proof of \cite[Proposition 5]{stlccp} (see \cite[Section 8.3.4.4]{El_Ghaoui2015-eb}). 
The only difference is the nonconvex terms in the program, that is,~\eqref{eq:min_xu1} and~\eqref{eq:min_sys1}, which does not affect the proof. In particular, the equality condition of the inequality conditions
in~\eqref{eq:xll} and~\eqref{eq:eq1} are satisfied when the cost function~$s_\xi$ takes a local minimum.
\end{proof}

\textbf{(Part ii. from constraints to constraints)} After \textbf{(Part i)}, regardless of the outermost operator of the robustness function, all the constraints in~\eqref{eq:xl} and~\eqref{eq:xll} above are in either of the two forms~\eqref{eq:max_before} or~\eqref{eq:mellowmin}:
\begin{align}
\max(\rho_{\text{rev}}^{\Phi_1^{(1)}},...,\rho_{\text{rev}}^{\Phi_{y_{\max}}^{(1)}})\leq s_{\max}^{(1)}, \label{eq:max_before}\\
\min(\rho_{\text{rev}}^{\Psi_1^{(1)}},...,\rho_{\text{rev}}^{\Psi_{y_{\min}}^{(1)}}) \leq s_{\min}^{(1)}, \label{eq:mellowmin} 
\end{align}
where~$s_{\min}^{(1)},s_{\max}^{(1)}\in\mathbb{R}$ are variables and~$\rho_{\text{rev}}^{\Phi_i^{(1)}}$ and~$\rho_{\text{rev}}^{\Psi_i^{(1)}}$ are robustness functions corresponding to the subformulas~$\Phi_i^{(1)}$ and~$\Psi_i^{(1)}$, for~$i = \{1, \ldots, y_{\max}^{(1)}\}$ and~$i = \{1, \ldots, y_{\min}^{(1)}\}$, respectively.
These inequalities are paraphrased expressions of~\eqref{eq:xl} and~\eqref{eq:xll} above.
We continue the following steps until all the arguments in the~$\max$ and~$\min$ functions become predicate functions.

\textbf{(ii. max)} For inequality constraints of the form~\eqref{eq:max_before}, we transform it as follows:
\begin{equation}\label{eq:max_after}
\begin{aligned}[b]
\rho_{\text{rev}}^{\Phi_1^{(1)}}\leq s_{\max}^{(1)},\rho_{\text{rev}}^{\Phi_2^{(1)}}\leq s_{\max}^{(1)} ,..., \rho_{\text{rev}}^{\Phi_{y_{\max}}^{(1)}}\leq s_{\max}^{(1)}.
\end{aligned}
\end{equation}
This is, of course, equivalent to~\eqref{eq:max_before}.

\textbf{(ii. min)} For constraints of the form~\eqref{eq:mellowmin}, we check whether each argument of the~$\min$ of the left-hand side is a predicate function or not. We then replace arguments that are not a predicate function, say~$\rho_{\text{rev}}^{\Psi_i^{(1)}}$, by a variable~$s_{\text{new}}\in\mathbb{R}$ as 

\begin{equation}\label{eq:min_slack}
\min(\rho_{\text{rev}}^{\Psi_1^{(1)}},...,s_{\text{new}},...,\rho_{\text{rev}}^{\Psi_{y_{\min}}^{(1)}})\leq s_{\min}^{(1)}.
\end{equation}
Then, we add the following constraints:
\begin{equation}\label{eq:ii_minmax}
\rho_{\text{rev}}^{\Psi_i^{(1)}} \leq s_{\text{new}}, 
\end{equation}


After the procedures in \textbf{(Part ii)}, each constraint in \eqref{eq:max_after}--\eqref{eq:ii_minmax} takes the form of either \eqref{eq:max_before} or \eqref{eq:mellowmin}, with updated arguments and variables. Consequently, we apply the procedure in \textbf{(Part ii)} to each of these constraints again. For instance, consider the next constraint of the form \eqref{eq:max_before}. Denote its arguments as~$\rho_{\text{rev}}^{\Phi_i^{(2)}},i=\{1,...,y_{\max}^{(2)}\}$ and its variables as~$s_{\max}^{(2)}$, For this constraint, we follow the step in \textbf{(ii. max)}. This process is repeated iteratively for each constraint until we reach the predicate functions, the bottom of the STL tree.
This concludes the procedure. 
\end{procedure}

The resulting program, which is denoted as~$\mathcal{P}_{\text{DC}}$, is written as follows:
\begin{subequations}\label{eq:final}
\begin{align}
        \min _{\boldsymbol{z}}
        \hspace{0.2em}&s_\xi \label{eq:finalcost}\\
        \text { s.t. } &~\eqref{eq:min_sys1} ,\eqref{eq:min_xu1} \label{eq:final_xuu}\\
        & s_\xi \leq 0,  \label{eq:final_xii} \\
&
\left.\hspace{-1cm}
\begin{tabular}{l}
$\rho_{\text{rev}}^{\Phi_1}\leq s^{(1)}_{\max},\rho_{\text{rev}}^{\Phi_2}\leq s^{(1)}_{\max},...,\rho_{\text{rev}}^{\Phi_i}\leq s^{(1)}_{\max}$,  \\
$\rho_{\text{rev}}^{\Phi_{i+1}}\leq s^{(2)}_{\max},\rho_{\text{rev}}^{\Phi_{i+2}}\leq s^{(2)}_{\max},...,\rho_{\text{rev}}^{\Phi_j}\leq s^{(2)}_{\max}$,  \\
$\quad\quad\vdots\quad\quad\hspace{1cm}\vdots \hspace{2.5cm}~$  \\
$\rho_{\text{rev}}^{\Phi_k}\leq s^{(v)}_{\max},\rho_{\text{rev}}^{\Phi_{k+1}}\leq s^{(v)}_{\max},...,\rho_{\text{rev}}^{\Phi_{l}}\leq s^{(v)}_{\max}$  \\
\end{tabular}
\right\} \begin{tabular}{l}\text{from }$\max$\\
\text{nodes} \end{tabular}\label{eq:finalmax}\\
&
\left.\hspace{-1cm}
\begin{tabular}{l}
$\min(\rho_{\text{rev}}^{\Psi^{(1)}_1},...,\rho_{\text{rev}}^{\Psi^{(1)}_{y_{\min}}}) \leq s^{(1)}_{\min}$, \\
$\quad\quad\vdots$ \\
$\min(\rho_{\text{rev}}^{\Psi^{(w)}_1},...,\rho_{\text{rev}}^{\Psi^{(w)}_{y_{\min}}}) \leq s^{(w)}_{\min}~$,\\
\end{tabular}
\right\}  \begin{tabular}{l}\text{from }$\min$\\
\text{nodes} \end{tabular}\label{eq:finalmin}
\end{align}
\end{subequations}
where~$\boldsymbol{z}=[\boldsymbol{x}^{\mathsf{T}},\boldsymbol{u}^{\mathsf{T}},s_\xi,s^{(1)}_{\max},...,s^{(v)}_{\max},s^{(1)}_{\min},...,s^{(w)}_{\min}]^{\mathsf{T}}\in \mathbb{R}^{(n+m)H+1+v+w}$ denotes the vector of optimization variables, and~$l$ and~$w$ represent the number of constraints in~\eqref{eq:finalmax} and~\eqref{eq:finalmin}, respectively. There is a slight abuse of notation: either~$s_{\max}^{(1)}$ or~$s_{\min}^{(1)}$ corresponds to~$s_\xi$, while some arguments of the~$\min$ functions in \eqref{eq:finalmin} are introduced variables~$s^{(\boldsymbol{\cdot})}_{\max}$. 
All the formulas in this program, i.e.,~$\Phi_1, \ldots, \Phi_l$ and~${\Psi^{(i)}_1}, \ldots, {\Psi^{(i)}_{y_{\min}}}$ (for~$i \in \{1, \ldots, w\}$), are predicates, and the program no longer contains any~$\max$ functions.
Although~$w$ is equal to the number of times we apply the procedure in \textbf{(ii. min)} to constraints of the form~\eqref{eq:mellowmin},~$l$ is not necessarily equal to $v$ nor to~$\sum_{i=1}^{v} y_{\max}^{(i)}$.

\begin{remark}
Procedure~\ref{procedure} above uses the idea of the epigraphic reformulations \cite[Sec.~4.1]{Boyd2004-kn}, which transform the convex (typically linear) cost function into the corresponding epigraph constraints. However, our approach is different from the traditional epigraphic reformulation-based approach in the sense that we handle \textit{nonconvex} cost functions and we have to apply reformulation \textit{recursively}.
\end{remark}

\subsection{Removing successive~\(\min\) by formula simplification}\label{subsection:simplification}

Although Procedure~\ref{procedure} is theoretically equivalent to the original program (will be proven in Theorem~\ref{theo:sameoptimal}), the inequalities~\eqref{eq:min_slack} and~\eqref{eq:ii_minmax} differ from the original inequality~\eqref{eq:mellowmin}. Specifically, during the optimization process, if the variable~$s_{\text{new}}$ is not the minimum among the arguments of the~$\min$ functions in~\eqref{eq:min_slack}, the equality condition in~\eqref{eq:ii_minmax} may not hold. Consequently, a discrepancy can occur between the left-hand side and the right-hand side of~\eqref{eq:mellowmin}.
Although the transformation is theoretically valid, its practical application requires refinement to mitigate such discrepancies. To address this issue, we adopt a simplification technique inspired by \cite{Kurtz2022-pe} and apply it to the robustness tree. This involves equivalently transforming the formulas by recursively simplifying consecutive nodes of the same type (either~$\max$ or~$\min$), starting from the root and proceeding downward, until no such consecutive nodes remain. By reducing the number of~$\min$-type nodes, this simplification effectively minimizes potential practical discrepancies between the two programs.
\begin{example}
Fig.~\ref{fig:simplified} illustrates this technique applied to the robustness tree in Fig.~\ref{fig:robustnesstree}. The number of subtrees whose top nodes are~$\min$ type,~$N_{\vee}^\varphi$, was reduced from~$3$ to~$2$.
\end{example}

\begin{figure}[ht]
    \centering
    \includegraphics[keepaspectratio, scale=0.5]{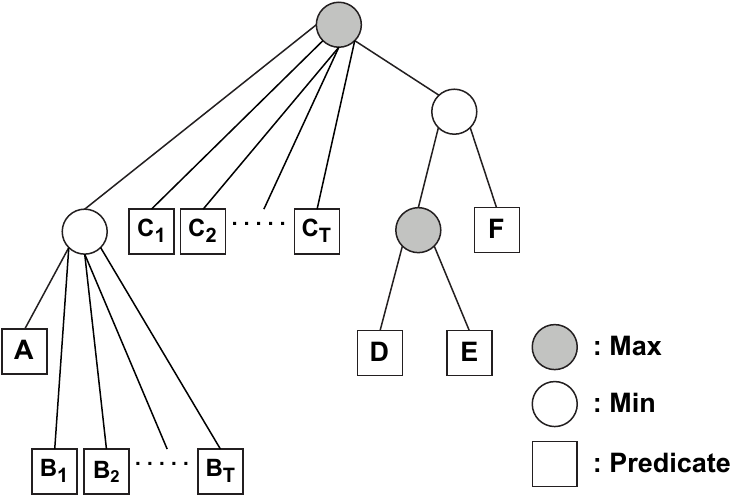}
    \caption{The simplified version of the robustness tree in \protect{Fig.~\ref{fig:robustnesstree}}.}
    \label{fig:simplified}
\end{figure}

In the rest of this paper, formula~$\varphi$ refers to its simplified version. Importantly, the robustness simplification does not affect Procedure~\ref{procedure} because it is performed offline and does not alter the type~$\mathcal{A}$ of the tree~$\mathcal{T}^\varphi$ or its subtrees. The overall reformulation procedure in Sections~\ref{subsec:fewer} and~\ref{subsection:simplification} is summarized in Algorithm~\ref{alg:reformulation}.

\begin{algorithm}
\small
\caption{Program reformulation}\label{alg:reformulation}
\begin{algorithmic}[1]
	\Require{Original program~\eqref{eq:moto}}
	\Ensure{The reformulated program (\textbf{Cost},\textbf{Constr})}
        \State{Simplify the formula;}
        \State{\textbf{Constr}$\gets\{\eqref{eq:min_sys1},~\eqref{eq:min_xu1},\eqref{eq:xi_0}\}$; \textbf{Cost}$\gets\{\eqref{eq:min_xi}\}$};
        \If{$\rho_{\text{rev}}^{\varphi}=\max \left(\rho_{\text{rev}}^{\varphi_1}, \rho_{\text{rev}}^{\varphi_2}, \ldots, \rho_{\text{rev}}^{\varphi_r}\right)$}
        \State{\textbf{Constr}$\gets\textbf{Constr}\cup\{\eqref{eq:xl}\}$;}
        \Else{ (if~$\rho_{\text{rev}}^{\varphi}=\min \left(\rho_{\text{rev}}^{\varphi_1}, \rho_{\text{rev}}^{\varphi_2}, \ldots, \rho_{\text{rev}}^{\varphi_r}\right)$)}
                    \State{\textbf{Constr}$\gets\textbf{Constr}\cup\eqref{eq:xll}\}$};
        \EndIf

        \While{$\neg(\forall \text{constr}\in\textbf{Constr}\text{, every argument is either a predicate}\linebreak\text{function or an introduced variable})$}
            \If{constr is max form~\eqref{eq:max_before}}
                \ForEach{$i\in\{1,...,y_{\max}\}$  }
                    \If{not all the arguments are predicates}
                    \State{\textbf{Constr}$\gets(\textbf{Constr}-\{\eqref{eq:max_before}\} )\cup\{\eqref{eq:max_after}\}$;}

                    \EndIf
                \EndFor
            \EndIf
        
            \If{constr is min form~\eqref{eq:mellowmin}}
                    \ForEach{$i\in\{1,...,y_{\min}\}$ in~\eqref{eq:mellowmin}}
                    \State{\textbf{Constr}$\gets(\textbf{Constr}-\{\eqref{eq:mellowmin}\})\cup\{\eqref{eq:min_slack}\}\cup\{\eqref{eq:ii_minmax}\}$;}

                \EndFor
            \EndIf
        \EndWhile
        
\end{algorithmic}
\end{algorithm}

\subsection{Equivalence analysis}\label{subsec:result}

The following theorem first clarifies that Procedure~\ref{procedure} retains the soundness property of the robustness function, ensuring that program~\eqref{eq:final} guarantees formula satisfaction.
\begin{theo}\label{theo:satisfy}
A feasible solution~$\boldsymbol{z}$ of program~\eqref{eq:final} always satisfies the STL specification~$\varphi$, i.e.,~$\boldsymbol{x} \vDash \varphi$.
\end{theo}
\begin{proof}
We can focus on \textbf{(Part ii)} because of Lemma~\ref{lem;eqcost}. Consider the initial replacement of the robustness function~$\rho_{\text{rev}}^{\Psi_i}(\boldsymbol{x})$ with a new variable~$s_{\text{new}}$. This transformation converts inequality~\eqref{eq:mellowmin} into two inequalities,~\eqref{eq:min_slack} and~\eqref{eq:ii_minmax}. Because~$\min$ is an increasing function with respect to each argument, we have inequality 
$
\min(\rho_{\text{rev}}^{\Psi_1^{(1)}},...,\rho_{\text{rev}}^{\Psi_i^{(1)}},...,\rho_{\text{rev}}^{\Psi_{y_{\min}}^{(1)}}) \leq \min(\rho_{\text{rev}}^{\Psi_1^{(1)}},...,s_{\text{new}},...,\rho_{\text{rev}}^{\Psi_{y_{\min}}^{(1)}}).
$
Using this inequality with~\eqref{eq:min_slack}, inequality~\eqref{eq:mellowmin} holds even after the transformation involving~$s_{\text{new}}$. A similar reasoning applies to all subsequent transformations.
Thus, the original inequalities in~\eqref{eq:xl} eventually hold. Combining this with~$s_\xi \leq 0$,~$\rho_{\text{rev}}^{\varphi}(\boldsymbol{x}) \leq 0$ holds. From~\eqref{eq:sound}, the statement is proven.
\end{proof}

The converse of Theorem~\ref{theo:satisfy} also holds. Let~$\Phi^{(i)}_{\text{parent}}$ denote the formula for the parent tree of its first-level subtrees~$\Phi^{(i)}_1, \dots, \Phi^{(i)}_{y_{\max}}$ for~$i \in \{1, \dots, l\}$, and let~$\Psi^{(i)}_{\text{parent}}$ be defined similarly for~$\Psi^{(i)}_1, \dots, \Psi^{(i)}_{y_{\max}}$ with~$i \in \{1, \dots, w\}$.

\begin{lemma}\label{lem:forward_feasible}
Let~$(\boldsymbol{x}^*,\boldsymbol{u}^*)$ denote a feasible solution of~\eqref{eq:moto}, then the following solution~$\boldsymbol{z}^*$ is feasible for~\eqref{eq:final}.
\begin{align}\label{z_star}
&\boldsymbol{z}^*:=(\boldsymbol{x}^*,\boldsymbol{u}^*,
s_\xi^*=\rho_{\text{rev}}^{\varphi}(\boldsymbol{x}^*),
s_{\max}^*=(\rho_{\text{rev}}^{\Phi^{(1)}_{\text{parent}}}(\boldsymbol{x}^*),\nonumber \\ 
&...,\rho_{\text{rev}}^{\Phi^{(v)}_{\text{parent}}}(\boldsymbol{x}^*)),s_{\min}^*=(\rho_{\text{rev}}^{\Psi^{(1)}_{\text{parent}}}(\boldsymbol{x}^*),...,\rho_{\text{rev}}^{\Psi^{(w)}_{\text{parent}}}(\boldsymbol{x}^*)))
\end{align}
\end{lemma}
\begin{proof}
Using the substitution~$\boldsymbol{z}^*$ from~$(\boldsymbol{x}, \boldsymbol{u}, s_\xi, s_{\max}, s_{\min})$ to~$(\boldsymbol{x}, \boldsymbol{u})$ in \eqref{z_star}, the equality condition of all inequalities in~\eqref{eq:finalmax} and~\eqref{eq:finalmin} is satisfied. Consequently, the feasible region of~\eqref{eq:final} becomes identical to that of~\eqref{eq:moto}.  
\end{proof}

Finally, program~\eqref{eq:final} is \textit{equivalent} to the original program~\eqref{eq:moto}, and it is a DC program under Assumptions~\ref{assum:dc}--\ref{assum:dcpredicate}. This reformulation does not require Assumption~\ref{assum:maxmin}, which was required in the procedure outlined in the proof of Proposition~\ref{theo:dcprogram}.
\begin{theo}\label{theo:sameoptimal} Procedure~\ref{procedure} can equivalently transform program~\eqref{eq:moto} into DC program~$\mathcal{P}_{\text{DC}}$~\eqref{eq:final} under Assumptions~\ref{assum:dc}--\ref{assum:dcpredicate}.
\end{theo}
\begin{proof} 
Let a feasible solution ($\boldsymbol{x}^*,\boldsymbol{u}^*$) be optimal for~\eqref{eq:moto}, then we first show~$\boldsymbol{z}^*$ in~\eqref{z_star} is optimal for~\eqref{eq:final}. From Lemma~\ref{lem:forward_feasible}, if~$\boldsymbol{z}^*$ is not the optimal solution for~\eqref{eq:final}, there exists another feasible solution~$(\boldsymbol{x}',\boldsymbol{u}',s_\xi', s_{\max}',s_{\min}')$, with a better objective value, i.e., such that~$s_\xi'< s_\xi^*=\rho_{\text{rev}}^{\varphi}(\boldsymbol{x}^*)$. As in the proof of Theorem~\ref{theo:satisfy},~$\rho_{\text{rev}}^{\varphi}(\boldsymbol{x}')\leq s_\xi'$. Therefore,~$\rho_{\text{rev}}^{\varphi}(\boldsymbol{x}')<\rho_{\text{rev}}^{\varphi}(\boldsymbol{x}^*)$. However, as Theorem~\ref{theo:satisfy} shows that~$\boldsymbol{x}'$ is also feasible for~\eqref{eq:moto}, this inequality contradicts the fact that~$\boldsymbol{x}^*$ is optimal for~\eqref{eq:moto}. Therefore,~$\boldsymbol{z}^*$ is optimal for~\eqref{eq:final} and the global optimum of both programs~\eqref{eq:moto} and~\eqref{eq:final} is~$\rho_{\text{rev}}^{\varphi}(\boldsymbol{x}^*)$. We can also prove analogously that if a feasible solution~$\boldsymbol{z}$ is optimal for~\eqref{eq:final}, then~$(\boldsymbol{x},\boldsymbol{u})$ is optimal for~\eqref{eq:moto}. 

Next, the resulting program \eqref{eq:final} is confirmed as a DC program following the composition rule~\cite[Section 3.2.4]{Boyd2004-kn} and the monotonicity of the $\min$ function, similar to the proof of Proposition~\ref{theo:dcprogram}. However, unlike Proposition~\ref{theo:dcprogram}, Assumption~\ref{assum:maxmin} on predicate functions is not required, as the $\max$ functions are all decomposed in Algorithm~\ref{alg:reformulation}.
\end{proof}

\subsection{Program complexity under more restrictive assumptions}\label{subsec:effresult}
This subsection demonstrates the efficiency of Procedure~\ref{procedure} when applied with CCP under more restrictive assumptions. To illustrate the structure of the program~$\mathcal{P}_{\text{DC}}$, consider the following key property:
\begin{prop}\label{prop:onetoone}
The number of constraints in~\eqref{eq:finalmax} corresponds to the number of predicates whose parent nodes are~$\max$-type, that is,~$l=N^{\varphi}_{p\wedge}$. 
Similarly, the number of constraints in~\eqref{eq:finalmin} corresponds to the number of~$\min$-type subtrees of the robustness tree~$\mathcal{T}^\varphi$, that is,~$w=N^\varphi_{\vee}$. 
Each constraint, indexed by~$i \in \{1, \ldots, N_{p\wedge}^\varphi\}$ (respectively~$i \in \{1, \ldots, N_\vee^\varphi\}$), matches uniquely with a~$\max$-type leaf~$\mathcal{L}^\varphi_{\wedge,i}$ (respectively a~$\min$-type subtree~$\mathcal{T}^\varphi_{\vee,i}$).
\end{prop}
\begin{example}
Furthermore, consider the formula in Fig.~\ref{fig:robustnesstree}. Algorithm~\ref{alg:reformulation} produces~$N_{p\wedge}^\varphi=T+2$ constraints of the form~\eqref{eq:finalmax}:
\begin{align}\label{eq:maxmax}
g^{C_1} \leq s_\xi, \dots, g^{C_T} \leq s_\xi, g^D \leq s_{\max}, g^E \leq s_{\max},
\end{align}
and ~$N_{\vee}^\varphi=2$ constraints of the form~\eqref{eq:finalmin}:
\begin{align}\label{eq:minmin}
\hspace{-0.25cm}\min(g^A, g^{B_1}, \dots, g^{B_T}) \leq s_{\min}^{(1)},\min(s_{\max}, g^F) \leq s_{\min}^{(2)},
\end{align}
where~$s_\xi$, $s_{\max}$, and~$s_{\min}=(s_{\min}^{(1)},s_{\min}^{(2)})$ are the variables introduced during the robustness decomposition procedure as those in~\eqref{eq:final}. Each of these constraints represent a~$\max$-type leaf~$\mathcal{L}^\varphi_{\wedge}= \{g^{C_1}, \dots, g^{C_T}, g^D, g^E\}$ and a~$\min$-type subtree~$\mathcal{T}^\varphi_{\vee}= \{\mathcal{T}^{(A\vee\Diamond_{[1, T]}B)}, \mathcal{T}^{(D\wedge E)\vee F}\}$, respectively.
\end{example}

Leveraging this property, the following theorem demonstrates the effectiveness of the proposed robustness decomposition under the assumption required for its validity.
\begin{assumption}\label{assum:major}
\textit{(System and state/input constraints)} The system~\eqref{eq:system} is linear, i.e.,~$x_{t+1} =A x_t+B u_t$, where~$A\in\mathbb{R}^{n\times n}$ and~$B\in\mathbb{R}^{n\times m}$. The functions~$h_x,h_u$ of~$\mathcal{X},\mathcal{U}$ are all convex.
\end{assumption}
\begin{theo}\label{theo:major}
Under Assumptions~\ref{assum:dcpredicate} and~\ref{assum:major}, CCP only majorizes inequalities of the form~\eqref{eq:finalmin}, which represents~$\min$-type subtrees~$\mathcal{T}^\varphi_{\vee,i}, i \in \{1, \ldots, N_\vee^\varphi\}$.
\end{theo}
\begin{proof}
The program~$\mathcal{P}_{\text{DC}}$ does not have nonaffine equality constraints~\eqref{eq:dc3} as the system is restricted to linear. The program~$\mathcal{P}_{\text{DC}}$ is a convex program except for the concave constraints of the form~\eqref{eq:finalmin}.
\end{proof}
In other words, the concave parts in~$\mathcal{P}_{\text{DC}}$ originate solely from the disjunctive operators in the simplified formula~$\varphi$, thereby minimizing the approximation in the majorization step.

Moreover, using the proposed decomposition, the subproblem can be a linear program if we impose stronger affine assumptions on sets~$\mathcal{X}, \mathcal{U}$ and STL predicate functions:
\begin{assumption}\label{assum:linear}
\textit{(System, state/input constraints, and STL predicate functions)} The system~\eqref{eq:system} is linear, i.e.,~$x_{t+1} =A x_t+B u_t$. The functions~$h_x,h_u$ of~$\mathcal{X},\mathcal{U}$, as well as predicate functions~$g^{\mu_i}\in \mathcal{M}$ whose parent node is~$\max$-type, are all affine (with respect to their respective arguments). However, predicate functions~$g^{\mu_i}$ whose parent node is~$\min$-type are concave.
\end{assumption}
\begin{theo}\label{theo:convexquadratic}
Under Assumption~\ref{assum:linear}, the program~$\mathcal{P}_{\text{DC}}$ becomes a linear program after the majorization. If we add the quadratic cost function, it becomes a quadratic program.
\end{theo}
\begin{proof}
Similar to Theorem~\ref{theo:major}, the result follows by linearizing the concave constraints~\eqref{eq:finalmin} in the program~$\mathcal{P}_{\text{DC}}$. Procedure~\ref{procedure} does not produce nonaffine equality constraints.
\end{proof}
Thus, under Assumption~\ref{assum:linear}, the proposed method reduces to sequential linear programming. If we add a quadratic cost function in~\eqref{eq:min_x1}, it becomes sequential quadratic programming (SQP). This is crucial since under general conditions in Assumptions~\ref{assum:dc} and~\ref{assum:dcpredicate}, the subproblem becomes a general convex program, solved sequentially via CCP. We denote the convex subproblem of program~$\mathcal{P}_{\text{DC}}$ after the majorization as~$\mathcal{P}_{\text{CP}}$.

\section{Iterative Optimization with Penalty CCP}\label{section:CCP}
This section explores the properties of the optimization step under two different CCP settings: the standard CCP and the tree-weighted penalty CCP.

\subsection{Standard CCP}\label{subsec:subproblem}

The standard CCP described in Section~\ref{subsec:ccp} guarantees formula satisfaction. Let~$\boldsymbol{z}=[\boldsymbol{x}^{\mathsf{T}},\ldots]^\mathsf{T}$ be a feasible solution to the resulting program~$\mathcal{P}_{\text{CP}}$. 
\begin{theo}\label{theo:lp_satisfy}
A feasible solution~$\boldsymbol{z}$ of program~$\mathcal{P}_{\text{CP}}$ always satisfies the STL specification~$\varphi$, that is,~$\boldsymbol{x} \vDash \varphi$.
\end{theo}
\begin{proof}
The first-order approximations of CCP at each step are global over-estimators. Specifically, for any DC function~$ p$ in Definition~\ref{def:dcf}, it holds that $p(\boldsymbol{z})=q(\boldsymbol{z})- r(\boldsymbol{z})\leq q(\boldsymbol{z})-r(\boldsymbol{z}_{(i)})+ \nabla r(\boldsymbol{z}_{(i)})^{\mathsf{T}}(\boldsymbol{z}_{(i)}-\boldsymbol{z})$ where~$\boldsymbol{z}_{(i)}$ is the current iteration point of variable~$\boldsymbol{z}$.
Therefore, a feasible solution~$\boldsymbol{z}$ of program~$\mathcal{P}_{\text{CP}}$ is a feasible solution to program~$\mathcal{P}_{\text{DC}}$.
By Theorem~\ref{theo:satisfy}, the statement follows.
\end{proof}

Moreover, due to this global nature of the inequality bounds in the proof of Theorem~\ref{theo:lp_satisfy} above, standard CCP has the following useful three properties (see~\cite{Lipp2016-fa,Sriperumbudur_lanckriet_2009} for a proof):
\begin{itemize}\label{item:eqconst}
\item[\rnum{1})] All of the iterates are feasible.
\item[\rnum{2})] The cost value converges. 
\item[\rnum{3})] CCP does not need to restrict the update at each iteration nor to perform a line search.
\end{itemize}
In particular, the third point differs from traditional SQP methods that often constrain the update within trust regions. However, these statements are formally ensured only if the starting point is in the feasible set, which is one disadvantage of standard CCP. 


\subsection{Tree-weighted penalty CCP (TWP-CCP)}\label{subsec:twp-ccp}

The TWP-CCP improves the algorithm by exploiting STL's hierarchical information. In an STL specification, not all logical operators hold equal importance. Nodes with more leaves are considered more critical because their precision significantly affects the overall tree structure and performance. To incorporate this priority information, we extend the idea of \textit{penalty CCP}~\cite{Lipp2016-fa}, which relaxes constraints by adding positive variables~$s_j\in\mathbb{R}$ to each constraint $j$ and penalizing their sum in the objective function. Specifically, let the subtree of~$\mathcal{T}^{\varphi}$ associated with the constraint $j\in \{1, \ldots, m\}$ be denoted as~$\mathcal{T}^{\varphi_j}$. We then assign priority weights to these penalty variables~$s_j$ based on the number of leaves~$N_p^{\varphi_j}$ the robustness tree~$\mathcal{T}^{\varphi_j}$ has. 
The relaxed version of program~\eqref{eq:dc} is:
\begin{subequations}\label{eq:leafnodeweighted}
\begin{align}
\min _{\boldsymbol{z}} \hspace{0.5em}& p_0(\boldsymbol{z})+\tau_{(i)} \sum_{j=1}^m N_p^{\varphi_j} s_j \label{eq:relaxobj}\\
\text { s.t. } 
& p_j\left(\boldsymbol{z}\right) \leq s_j \text{ and } s_j \geq 0, \quad j=\{1, \ldots, m\}. \label{eq:relaxcons}
\end{align}
\end{subequations}
In principle, we can introduce the penalty variables~$s_j$ in any constraint. For the program~$\mathcal{P}_{\text{DC}}$, the penalty variables were introduced only for~$w=N_\vee^{\varphi}$ concave constraints~\eqref{eq:finalmin}. We denote the relaxed version of the program~$\mathcal{P}_{\text{DC}}$ as~$\mathcal{P}_{\text{DC}}^r$ and its convex subproblem, obtained after the majorization, as $\mathcal{P}_{\text{CP}}^r$.

The TWP-CCP is summarized in Algorithm~\ref{alg:ccp}, where~$\boldsymbol{z}_{(i)}$ represents the value of the variables~$\boldsymbol{z}$ at iteration step~$i$. The cost function is defined as~$c(\boldsymbol{z}) := p_0(\boldsymbol{z}) + \tau_{(i)} \sum_{j=1}^m N_p^{\varphi_j} s_j$, and its value at iteration step~$i$ is denoted by~$c_{(i)} := c(\boldsymbol{z}_{(i)})$.
Other symbols mean the same as in Section~\ref{subsec:ccp}.

\begin{algorithm}
\caption{Tree-Weighted Penalty CCP}\label{alg:ccp}
\begin{algorithmic}[1]
	\Require{Parameters: an initial point~$z_0$, an initial weight~$\tau_0\in\mathbb{R}_{>0}$ and its maximum constant~$\tau_{\max}>\tau_0$ and its rate of change at each iteration,~$\nu>1$, the number of leaves~$N_p^{\varphi_j}$ associated with each constraint for~$j=\{1, \ldots, m\}$, and constants~$s_{\text{c}}\in\mathbb{R}_{\geqslant 0},s_{\text{sp}}\in\mathbb{R}_{\geqslant0}$;}
    \State{Compute the number of leaves~$N_p^{\varphi_j}$ associated with each constraint for~$j=1,...,m$, and initialize~$i:=0$;}
	\State{Majorize the concave terms~$-r_j,j=0, \ldots, m$ in~\eqref{eq:leafnodeweighted} to~$-\hat{r}_j$ where~$\hat{r}_j\left(z\right)=r_j\left(z_{(i)}\right)+\nabla r_j\left(z_{(i)}\right)^\mathsf{T}\left(z-z_{(i)}\right)$ and solve the convex program; }
	\State{Update~$\tau_{(i+1)}:=\min \left(\nu \tau_{(i)}, \tau_{\max }\right)$;}
	\State{Update iteration as~$i:=i+1$. Repeat until the termination conditions are met:~$\max_j(s_j) \leq s_{\text{c}}$ for penalty variables~$s_j$ and~$c_{(i+1)} - c_{(i)} \leq s_{\text{ep}}$ for cost function~$c$;}
\end{algorithmic}
\end{algorithm}
 
The termination condition is met when both the constraint violations and the improvement in cost are small, that is, 
\begin{align}\label{eq:stop}
\max_j(s_j)\leq s_{\text{c}}\text{ and }c_{(i+1)} - c_{(i)} \leq s_{\text{ep}},
\end{align}
where~$s_{\text{c}}$ and~$s_{\text{ep}}$ are small positive values close to zero. 

The TWP-CCP removes the requirement for an initial feasible solution in standard CCP, enabling the algorithm to begin with an initial point outside the feasible region of~$\mathcal{P}_{\text{CP}}$ while still ensuring formula satisfaction, as in Theorem~\ref{theo:lp_satisfy}. Additionally, the cost value also converges because, when~$\tau_{(i)} = \tau_{\max}$, the program~\eqref{eq:leafnodeweighted} with TWP-CCP behaves as a DC program applied with standard CCP.

The analysis of formula satisfaction in the TWP-CCP depends on the value of~$s_{\text{c}}$. When~$s_{\text{c}} = 0$, any feasible solution of the relaxed program~\eqref{eq:leafnodeweighted} lies within the feasible region of the original program~\eqref{eq:dc}, as the relaxed program converges to the original program by the end of the optimization, thereby ensuring formula satisfaction. Moreover, due to the monotonicity property, formula satisfaction can still be guaranteed for~$s_{\text{c}} \neq 0$, by appropriately adjusting the threshold in~\eqref{eq:min_xuu1}. 

\begin{theo}\label{theo:violate}
A feasible solution~$\boldsymbol{z}$ of program~$\mathcal{P}_{\text{CP}}^r$ with a modified threshold of~$-N^\varphi_{\vee} s_{\text{c}}$ (leading to~$s_\xi < -N^\varphi_{\vee} s_{\text{c}}$ rather than~\eqref{eq:final_xii}) satisfies the STL specification~$\varphi$, that is,~$\boldsymbol{x} \vDash \varphi$.
\end{theo}
\begin{proof}
Due to page limitations and for clarity, we sketch the proof of this theorem using the example of formula~$\varphi$ in Fig.~\ref{fig:robustnesstree} again.
First, a feasible solution $\boldsymbol{z}$ of program~$\mathcal{P}_{\text{CP}}^r$ is a feasible solution to program~$\mathcal{P}_{\text{DC}}^r$ similarly to the non-relaxed case in Theorem~\ref{theo:lp_satisfy}. Thus,  solution $\boldsymbol{z}$ satisfies \eqref{eq:maxmax} and the relaxed version of \eqref{eq:minmin} in~$\mathcal{P}_{\text{DC}}^r$.
By combining these constraints with~\eqref{eq:stop}, solution $\boldsymbol{z}$ satisfies~$T+2$ inequalities:
\begin{align}
    &\min(g^A, g^{B_1}, \dots, g^{B_T}) \leq s_\xi + s_{\text{c}}, \nonumber \\
    &\min(g^D, g^E, g^F) \leq s_\xi + s_{\text{c}}, g^{C_1} \leq s_\xi, \dots, g^{C_T} \leq s_\xi. \nonumber
\end{align}
By further combining these equations, $\boldsymbol{z}$ satisfies:
\begin{align}
    s_{\xi} &\geq \max\left(\min(g^A, g^{B_1}, \dots, g^{B_T}) - s_{\text{c}},  \min(g^D, g^E, g^F) - s_{\text{c}}, g^{C_1}, \dots, g^{C_T} \right)\nonumber \\
    &\geq \max\left(\min(g^A, g^{B_1}, \dots, g^{B_T}), \min(g^D, g^E, g^F),   g^{C_1}, \dots, g^{C_T} \right) - 2s_{\text{c}} \nonumber \\ 
    & = \rho_{\text{rev}}^{\varphi} - 2s_{\text{c}}. \label{eq:exfinal}
\end{align}
Since~$N^\varphi_{\vee} = 2$, \eqref{eq:exfinal} implies that~$\rho_{\text{rev}}^{\varphi} \leq s_\xi + N^\varphi_{\vee} s_{\text{c}}$. With~$s_\xi < -N^\varphi_{\vee} s_{\text{c}}$, we obtain~$\rho_{\text{rev}}^{\varphi}(\boldsymbol{x}) \leq 0$. From~\eqref{eq:sound}, the statement holds.
Although the formula in this example does not include multiple layers of~$\min$-type operators (e.g., $\max$-$\min$-$\max$-$\min$...), a similar proof applies to such cases due to the monotonicity of~$\min$, similar to the proof in Theorem~\ref{theo:satisfy}.
\end{proof}

\begin{remark}
Although TWP-CCP is technically a straightforward extension through penalization, weighting constraint violations based on robustness importance is enabled solely by the proposed decomposition approach, which provides a one-to-one connection between constraints in program~$\mathcal{P}_{\text{DC}}$ and the subtrees of~$\mathcal{T}^{\varphi}$, as stated in Proposition~\ref{prop:onetoone}.
\end{remark}

\section{Smooth Approximation by Mellowmin}\label{section:smooth}

One of the advantages of resulting DC program~$\mathcal{P}_{\text{DC}}$~\eqref{eq:final} is that smoothing only a small number of~$\min$ functions makes the program differentiable (when the functions in the assumptions are all differentiable). Although we can solve the non-differentiable program~$\mathcal{P}_{\text{DC}}$ directly using the subgradient of true~$\min$ function, this section explores the utilization of gradient-based algorithms. 

\subsection{Alternative smooth approximation}\label{subsec:smooth}

We propose a novel~$\min$'s smooth approximation suitable for our framework, which we call a mellowmin operator.
\begin{defi}\textit{(Mellowmin operator)}\label{defi:mellow}
\begin{equation}\label{minmellowdef}
\widetilde{\min}_k(a)=-\mathrm{mm}_k(-a),
\end{equation}
where~$a=(a_1,...,a_r)$,~$a_i \in\mathbb{R}$,~$k\in\mathbb{R}_{>0}$, and~$\mathrm{mm}_k(a)=\frac{1}{k}\ln \left(\frac{1}{r} \sum_{i=1}^r e^{k a_i}\right)$. 
\end{defi}
The mellowmax operator~$\mathrm{mm}_k(a)$ above was proposed as an alternative softmax operator in reinforcement learning in~\cite{Asadi_littman_2019, deepmellow2019p379} and can be considered as a log-\textit{average}-exp function, which is an under-approximation of~$\max$ function.

\begin{lemma}\label{lem:mellowmin_over} 
The over-approximation error bound of~$\widetilde{\min}_k$~\eqref{minmellowdef} is given by~$0 \leq \widetilde{\min}_k(a)-\min(a) \leq \frac{\log(r)}{k}$.
\end{lemma}
\begin{proof}
See Appendix~\ref{app:overapprox}.
\end{proof}

Here, we define a new robustness measure using the mellowmin function~$\widetilde{\min}_k$. The proposed robustness measure considers smooth approximation \textit{only} for the~$\min$ functions as~$\max$ functions are rather \textit{decomposed}.
\begin{defi}\textit{(Mellowmin robustness)}\label{def:newrobustness}
Given a formula~$\varphi$ and a trajectory~$\boldsymbol{x}$, the mellowmin robustness~$\widetilde{\rho}^{\varphi,k}(\boldsymbol{x})$ is defined as~$\widetilde{\rho}^{\varphi,k}(\boldsymbol{x}) = - \widetilde{\rho}^{\varphi,k}_{\text{rev}}(\boldsymbol{x})$, where~$k \in \mathbb{R}_{>0}$, and~$\widetilde{\rho}^{\varphi,k}_{\text{rev}}(\boldsymbol{x})$ denotes the reversed robustness function~$\rho^{\varphi}_{\text{rev}}(\boldsymbol{x})$ from \eqref{eq:robustnessrev}, with every~$\min$ operator replaced by the mellowmin operator~$\widetilde{\min}_k$ defined in \eqref{minmellowdef}. 
\end{defi}

\begin{theo}\textit{(Soundness)}\label{theo:soundness}
The mellowmin robustness~$\widetilde{\rho}^{\varphi,k}(\boldsymbol{x})$ is sound for $k\in\mathbb{R}_{>0}$, that is, $\widetilde{\rho}^{\varphi,k}_{\text{rev}}(\boldsymbol{x}) \leq 0\implies\boldsymbol{x} \vDash \varphi$.
\end{theo}
\begin{proof}
From Lemma~\ref{lem:mellowmin_over}, for any STL formula~$\varphi$ in negation normal form (NNF),
\begin{itemize}
  \item~$\widetilde{\rho}_{\text{rev}}^\mu(\boldsymbol{x}, t)=\rho_{\text{rev}}^\mu(\boldsymbol{x}, t)$
 \item~$\widetilde{\rho}_{\text{rev}}^{\varphi_1 \wedge \varphi_2}(\boldsymbol{x}, t) = \rho_{\text{rev}}^{\varphi_1 \wedge \varphi_2}(\boldsymbol{x}, t)$
 \item~$\widetilde{\rho}_{\text{rev}}^{\varphi_1 \vee {\varphi_2}}(\boldsymbol{x}, t) \geq \rho_{\text{rev}}^{\varphi_1 \vee \varphi_2}(\boldsymbol{x}, t)$
\item~$\widetilde{\rho}_{\text{rev}}^{ \square_{\left[t_1, t_2\right]} \varphi}(\boldsymbol{x}, t) = \rho_{\text{rev}}^{ \square_{\left[t_1, t_2\right]} \varphi}(\boldsymbol{x}, t)$
\item~$\widetilde{\rho}_{\text{rev}}^{ \Diamond_{\left[t_1, t_2\right]} \varphi}(\boldsymbol{x}, t) \geq \rho_{\text{rev}}^{ \Diamond_{\left[t_1, t_2\right]} \varphi}(\boldsymbol{x}, t)$
 \item~$\widetilde{\rho}_{\text{rev}}^{\varphi_1 \mathbf{U}_{\left[t_1, t_2\right]} \varphi_2}(\boldsymbol{x}, t) \geq \rho_{\text{rev}}^{\varphi_1 \mathbf{U}_{\left[t_1, t_2\right]} \varphi_2}(\boldsymbol{x}, t)$.
\end{itemize}
By induction, these inequalities follow that~$\widetilde{\rho}^{\varphi,k}_{\text{rev}}(\boldsymbol{x}) \leq 0\implies \rho^{\varphi}_{\text{rev}}(\boldsymbol{x})\leq 0$.
From~\eqref{eq:sound}, the statement follows.
\end{proof}

\begin{theo}\textit{(Asymptotic completeness)}\label{theo:complete} There exists a constant~$k_{\min}\in\mathbb{R}_{>0}$ such that 
\begin{equation}
\left| \rho^{\varphi}_{\text{rev}}(\boldsymbol{x})- \widetilde{\rho}^{\varphi,k}_{\text{rev}}(\boldsymbol{x}) \right| \leq \epsilon \text{ for all~$k \geq k_{\min}$}. \nonumber
\end{equation}
\end{theo}
\begin{proof}
See Appendix~\ref{app:theo_complete}.
\end{proof}

This theorem ensures that the mellowmin robustness~$\widetilde{\rho}^{\varphi,k}$ approaches the robustness~$\rho^\varphi$ as~$k$ approaches~$\infty$. In other words, the mellowmin robustness~$\widetilde{\rho}^{\varphi,k}$ asymptotically recovers the completeness property, that is,~$\widetilde{\rho}_{\text{rev}}^{\varphi,k}(\boldsymbol{x}) \geq 0 \implies   \boldsymbol{x} \nvDash \varphi$. Let us denote the program~\eqref{eq:moto} whose robustness functions~$\rho^{\varphi}_{\text{rev}}$ are replaced with~$\widetilde{\rho}^{\varphi,k}_{\text{rev}}$ as~$\widetilde{\mathcal{P}}$, and~$\widetilde{\mathcal{P}}$ after applying Algorithm~\ref{alg:reformulation} as~$\widetilde{\mathcal{P}}_{\text{DC}}$. This~$\widetilde{\mathcal{P}}_{\text{DC}}$ is the program~\eqref{eq:final} whose~$\min$ functions in~\eqref{eq:finalmin} are replaced by~$\widetilde{\min}_k$ operators.

The following corollary of Theorems~\ref{theo:satisfy} and~\ref{theo:sameoptimal} proves the equivalence and guarantees formula satisfaction for the reformulated program in the smoothed case.
\begin{corollary}
$\widetilde{\mathcal{P}}_{\text{DC}}$ is equivalent with~$\widetilde{\mathcal{P}}$. Moreover, a feasible solution~$\boldsymbol{z}$ of the program~$\widetilde{\mathcal{P}}_{\text{DC}}$ always satisfies $\varphi$, i.e.,~$\boldsymbol{x} \vDash \varphi$.
\end{corollary}
\begin{proof}
The proofs of Theorems~\ref{theo:satisfy} and~\ref{theo:sameoptimal} do not rely on specific assumptions about the robustness function, apart from the monotonicity property. Since the mellowmin function is strictly monotone with respect to each of its arguments (for any~\(k > 0\), note that~\(\frac{\partial \widetilde{\min}_k(a)}{\partial a_i} = -\frac{e^{-k a_i}}{\sum_{i=1}^r e^{-k a_i}} < 0\)), the inequality~\(\widetilde{\rho}_{\text{rev}}^{\varphi}(\boldsymbol{x}) \leq 0\) is satisfied. Therefore, the program~\(\widetilde{\mathcal{P}}_{\text{DC}}\) is equivalent to~\(\widetilde{\mathcal{P}}\). Using this inequality along with Theorem~\ref{theo:soundness}, the latter claim also holds.
\end{proof}
Through this statement, the three properties of Standard CCP discussed in Section~\ref{subsec:subproblem} hold also within the smoothed case.\footnote{On the other hand,~$\widetilde{\mathcal{P}}_{\text{DC}}$ may fail to find a solution even when~$\mathcal{P}_{\text{DC}}$ has one. In such cases, one can either increase the value of~$k$ or try different initial guesses.} 

Moreover, the strict monotonicity of the mellowmin function can aid the optimization process in finding a better solution.
For instance, consider the mellowmin version of~\eqref{eq:mellowmin}:~$\widetilde{\min}_k(\rho_{\text{rev}}^{\Psi_1}, \ldots, \rho_{\text{rev}}^{\Psi_{y_{\min}}}) \leq s_{\min}$. 
When any of the arguments of the mellowmin robustness function decreases in the iterative optimization process, the value on the left-hand side must also decrease as the mellowmin function is not solely determined by the most critical value. This behavior contrasts with the traditional robustness function, which is non-strictly monotone and can remain constant. Consequently, the bound~$s_{\min}$ on the right-hand side is effectively lowered, which could ultimately reduce the objective function value. 

\begin{figure*}[t]
 \begin{minipage}[b]{0.243\linewidth}
  \centering
 \mbox{\raisebox{0mm}{\includegraphics[keepaspectratio, scale=0.3]{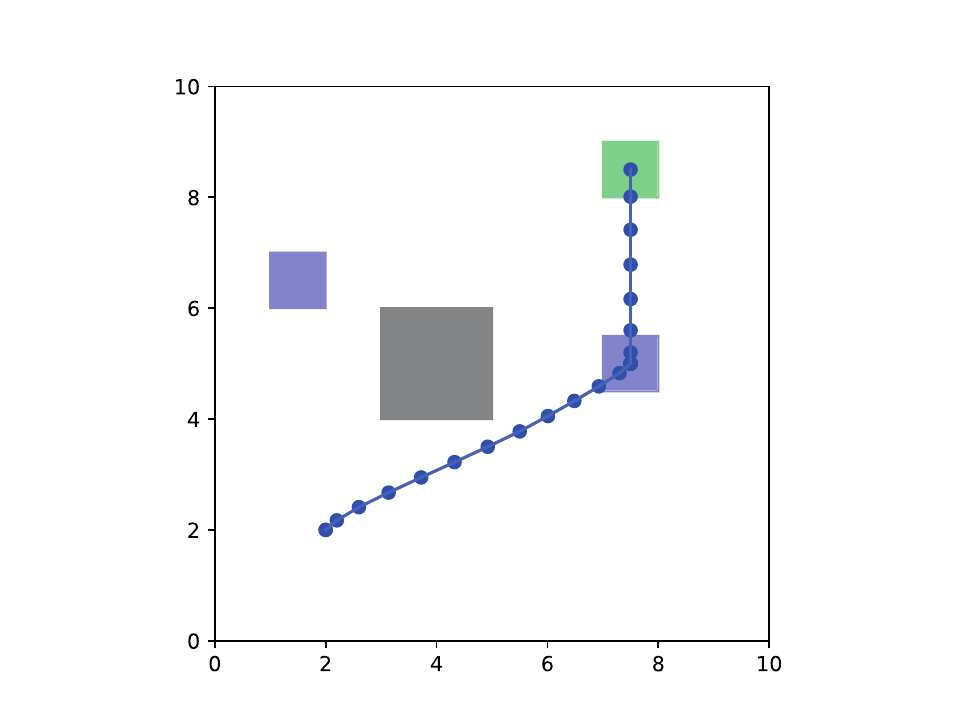}}}
  \subcaption{two-target}\label{either-MIP}
 \end{minipage}
 \begin{minipage}[b]{0.243\linewidth}
  \centering
  \includegraphics[keepaspectratio, scale=0.3]{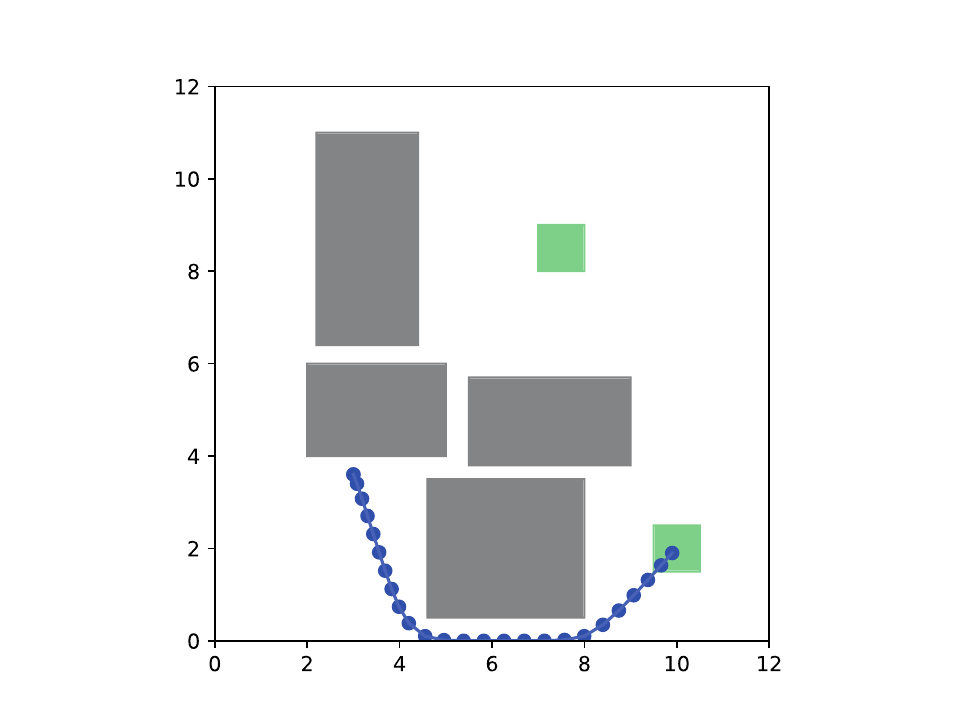}
  \subcaption{narrow-passage}\label{narrow-MIP}
 \end{minipage}
  \begin{minipage}[b]{0.243\linewidth}
  \centering
  \includegraphics[keepaspectratio, scale=0.3]{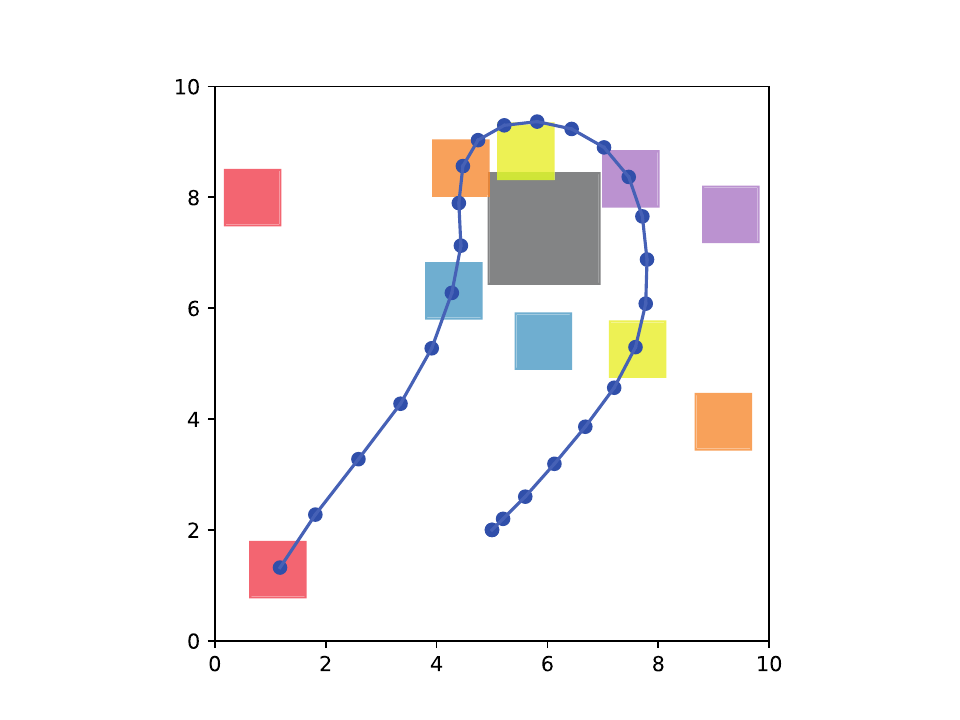}
  \subcaption{many-target}\label{multi-MIP}
 \end{minipage}
   \begin{minipage}[b]{0.243\linewidth}
  \centering
  \includegraphics[keepaspectratio, scale=0.3]{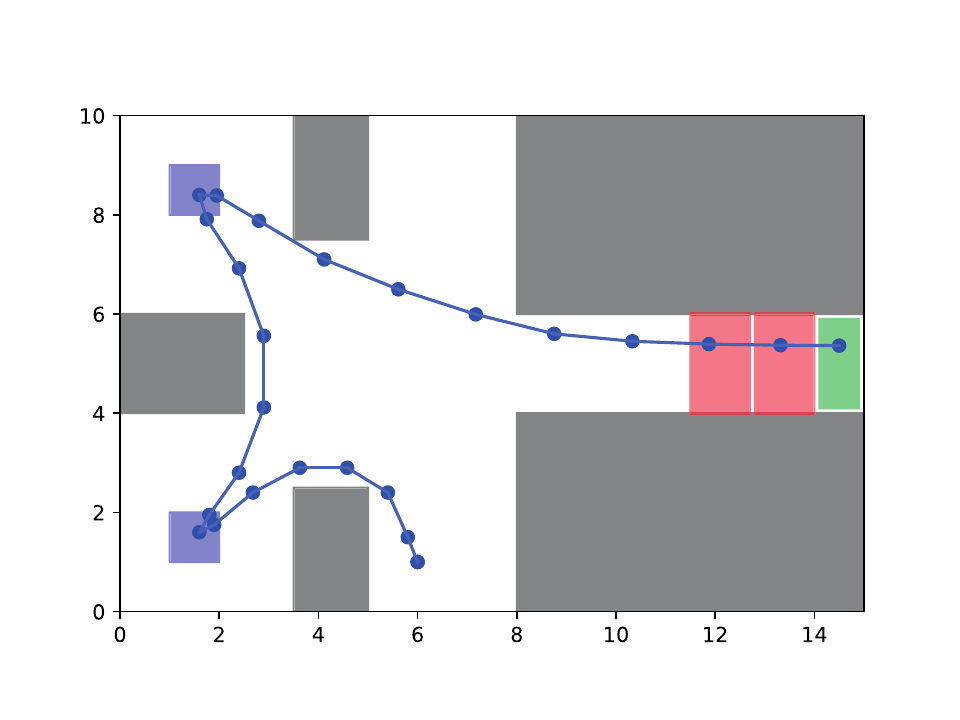}
  \subcaption{door-puzzle}\label{door-MIP}
 \end{minipage}
    \caption{Illustrations of scenarios along with solutions generated by the MIP-based method with~$H=25$. The greyed regions express obstacles ($O$) that the robot must avoid for all time steps. The green regions express goals ($G$) that the robot must reach at the last time step. Other regions colored purple, blue, orange, and pink represent areas that the robot must pass through (or stop by) at least one time step. Specifically, the STL specification for each scenario is as follows: \textbf{Two-target:}~$\Diamond_{[0, H-5]}(\square_{[0,5]} T_1 \vee \square_{[0,5]} T_2) \wedge \square_{[0, H]} \neg O \wedge \Diamond_{[0, H]} G$, \textbf{Narrow-passage:}~$\Diamond_{[0, H]}(G_1 \vee G_2) \wedge \square_{[0, H]}(\bigwedge_{i=1}^4 \neg O_i)$, \textbf{Many-target:}~$\bigwedge_{i=1}^5(\bigvee_{j=1}^2 \Diamond_{[0, H]} T_i^j) \wedge \square_{[0, H]}(\neg O)$, \textbf{Door-puzzle:}~$\bigwedge_{i=1}^2(\neg D_i \boldsymbol{U}_{[0, H]} K_i) \wedge \Diamond_{[0, H]} G \wedge \square_{[0, H]}(\bigwedge_{i=1}^5 \neg O_i)$.
    In the two-target and narrow-passage specifications, the robot must pass through one of the two same-colored regions. In the many-target specification, the robot must reach both same-colored regions. In the door-puzzle specification, the robot has to collect keys ($K$) in the two blue regions to open the corresponding doors represented by the red and green regions.}
    \label{fig;scenarioil}
 \normalsize
\end{figure*}

\subsection{Advantages of the mellowmin function}\label{subsec:other_smooth}
The most common smooth function for~$\max$ and~$\min$ functions in the robustness function~\eqref{eq:robustnessrev} is the log-sum-exp (LSE) function~\cite{Pant_smooth,Hashimoto2022-xu}:~$\overline{\max}_k:=\frac{1}{k} \ln \sum_{i=1}^r e^{k a_i}$ and~$\overline{\min}_k:=-\overline{\max }_k \left(-a\right)$. Although this smooth function is differentiable everywhere, the resulting robustness is not sound~\eqref{eq:sound}~\cite{Pant_smooth}. This is a considerable defect as a measure for guaranteeing safety. However, the mellowmin robustness function is sound and asymptotically complete as in Theorems~\ref{theo:soundness} and~\ref{theo:complete}. Therefore, the mellowmin function is preferred over the LSE function. 

The other popular smooth function is the soft-min operator~$\widehat{\min}_k:=-\frac{\sum_{i=1}^r a_i e^{-k a_i}}{\sum_{i=1}^r e^{-k a_i}}$, which is sound~\cite{Gilpin2021-wv}.
However, the function~$\widehat{\min}_k$ is unsuitable for the CCP-based approach due to its undetermined curvature  (convex or concave), while the mellowmin is concave for~$k > 0$, making it compatible with the CCP-based approach.
\begin{prop}\label{prop:mellowconcave}
The mellowmin~$\widetilde{\min}_k$ is concave for~$k>0$.
\end{prop}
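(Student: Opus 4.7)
The plan is to reduce the claim to the well-known convexity of the log-sum-exp function. Starting from Definition~\ref{defi:mellow}, I would first rewrite
$$\widetilde{\min}_k(\boldsymbol{a}) = -\mathrm{mm}_k(-\boldsymbol{a}) = -\frac{1}{k}\ln\!\left(\frac{1}{r}\sum_{i=1}^r e^{-k a_i}\right) = -\frac{1}{k}\ln\!\sum_{i=1}^r e^{-k a_i} + \frac{\ln r}{k},$$
so that the only non-trivial term is a (scaled and shifted) log-sum-exp evaluated at $-k\boldsymbol{a}$.

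Next I would invoke the standard fact that the log-sum-exp function $\mathrm{lse}(\boldsymbol{b}) = \ln\sum_i e^{b_i}$ is convex on $\mathbb{R}^r$ (for example via a direct check that its Hessian is positive semidefinite, being a difference between a diagonal matrix of probabilities and a rank-one matrix of their outer product, which by Cauchy--Schwarz is PSD). Convexity is preserved under composition with the affine map $\boldsymbol{a} \mapsto -k\boldsymbol{a}$ and under multiplication by the positive scalar $1/k$, and the additive constant $-\ln(r)/k$ does not affect curvature. Hence $\mathrm{mm}_k(-\boldsymbol{a})$ is convex in $\boldsymbol{a}$ for any $k>0$, and therefore $\widetilde{\min}_k(\boldsymbol{a}) = -\mathrm{mm}_k(-\boldsymbol{a})$ is concave, which is exactly the claim.

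There is no real obstacle here; the main point is simply to make explicit that the hypothesis $k>0$ is what allows us to apply the positive-scaling and affine-composition rules, and that the $1/r$ averaging inside the logarithm contributes only a constant offset relative to log-sum-exp, so it cannot change the curvature. For completeness, one could also note that strict concavity holds whenever the components of $\boldsymbol{a}$ are not all equal, since the Hessian of $\mathrm{lse}$ is then positive definite on the hyperplane orthogonal to the all-ones vector, but this sharper statement is not needed for the proposition.
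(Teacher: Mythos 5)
Your proof is correct and takes essentially the same route as the paper's: both arguments come down to the positive semidefiniteness of the log-sum-exp/mellowmax Hessian (the paper writes $\nabla^2\mathrm{mm}_k$ explicitly as $\tfrac{k}{(\mathbf{1}^\top\boldsymbol{\sigma})^2}\bigl((\mathbf{1}^\top\boldsymbol{\sigma})\operatorname{diag}(\boldsymbol{\sigma})-\boldsymbol{\sigma}\boldsymbol{\sigma}^\top\bigr)\succeq 0$, which is exactly the diagonal-minus-rank-one matrix you invoke) and then flip signs; your packaging via affine composition and positive scaling of $\mathrm{lse}$ is just a cleaner way of saying the same thing. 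One immaterial quibble on your closing aside: the restriction of the $\mathrm{lse}$ Hessian to the hyperplane orthogonal to the all-ones vector is in fact positive definite for \emph{every} $\boldsymbol{a}$ (its null space is always exactly $\operatorname{span}(\mathbf{1})$), so the condition ``components not all equal'' is not the right qualifier for that sharper statement --- but as you note, it is not needed for the proposition.
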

\begin{proof}
The Hessian of the mellowmin is~
$$
\nabla^2 \widetilde{\min}_k (a) = -\nabla^2 \widetilde{\max}_k (-a) = -\nabla^2 \overline{\max}_k (-a)  
=-\frac{k}{\left(1_r^{\mathsf{T}}(-\sigma)\right)^2}\left(\left(1_r^{\mathsf{T}} (-\sigma)\right) \operatorname{diag}(-\sigma)-(-\sigma)(-\sigma)^{\mathsf{T}}\right),
$$
where $1_r$ is the $r$-dimensional vector of all ones, and $\sigma=(e^{k a_1},...,e^{k a_r})$. As~$\nabla^2 \widetilde{\max}_k (a)\geq 0$ for $k \geq 0$ \cite{deepmellow2019p379}, we have~$\nabla^2 \widetilde{\min}_k (a) \leq 0$ for $k \geq 0$. Thus, the proposition holds.
\end{proof}


Moreover, the concavity of the program~$\widetilde{\mathcal{P}}_{\text{DC}}$ increases \textit{monotonically} as~$k\to \infty$, approaching the original program~\eqref{eq:moto}, while it becomes a \textit{convex} program as~$k \to 0$.\footnote{Replacing all~$\min$ functions with average functions makes robustness a convex function. \cite{Lindemann2019-os} exploited this to formulate a convex quadratic program but had to significantly restrict STL specifications. In this sense, our work extends theirs by allowing all STL specifications.}
The degree of concavity of the program~$\widetilde{\mathcal{P}}_\text{DC}$ is bounded under Assumption~\ref{assum:linear}, as all concave constraints take the form of the mellowmin version of~\eqref{eq:finalmin}. The concavity of these constraints is bounded as follows:
\begin{prop}\label{prop:mellowbound}
The concaveness of the mellowmin operator is bounded by the following expression:
\begin{equation}\label{eq:lsebound}
-k \|v\|^2 \leq v^{\mathsf{T}} \nabla^2 \widetilde{\min}_k(a) v \leq 0, 
\end{equation}
where~$k\in \mathbb{R}_{>0}$, $v\in\mathbb{R}^{r}$, and~$\|v\|^2=v^\mathsf{T}v$. This bound becomes tighter \textit{monotonically} as~$k\rightarrow 0$. 
\end{prop}
The proof is omitted, as applying negative signs to the bound on $\nabla^2 \overline{\max}_k (= \nabla^2 \widetilde{\max}_k)$ in~\cite{Gao_softmax} directly yields this result. This property is appropriate for analyzing the tradeoff between smoothing accuracy and algorithm convergence because algorithm convergence typically improves with lower degrees of concavity (see~\cite{Debrouwere2013-ds_diehl,2011diehl}).

To summarize, among the possible smooth~$\min$ functions, including~$\overline{\min}_k$ and~$\widehat{\min}_k$, the mellowmin~$\widetilde{\min}_k$ is the only function that can preserve soundness (Theorem~\ref{theo:soundness}) while being concave (Proposition~\ref{prop:mellowconcave}). Furthermore, it allows for a trade-off between the precision of the smooth approximation (as~$k \rightarrow \infty$) and reduced concavity (as~$k \rightarrow 0$). For these reasons, the mellowmin function is better suited for CCP than the other two alternatives.


\section{Numerical Simulations}\label{section:example}
This section demonstrates the effectiveness of the proposed method over two other methods through four benchmark scenarios. The implementation is available at \url{https://github.com/yotakayama/STLCCP}.
During the experiment in Section~\ref{subsec:sum_experiments}, we also propose a practical remedy, the warm-start approach, to effectively utilize the mellowmin functions. All experiments were conducted on a MacBook Air 2020 with an Apple M1 processor (Maximum CPU clock rate: 3.2 GHz) and 8GB of RAM. All robustness scores in this section refer to the scores of the traditional robustness~$\rho^\varphi$.

\subsection{Numerical setup} \label{subsec:comparedmethods}

The state and control input are defined as~\(x_t = [p_{x_t}, p_{y_t}, \dot{p}_{x_t}, \dot{p}_{y_t}]^\mathsf{T}\in \mathbb{R}^{4}\) and~\(u_t = [\ddot{p}_{x_t}, \ddot{p}_{y_t}]^\mathsf{T}\in \mathbb{R}^{2}\), where~\(p_{x_t}\) and~\(p_{y_t}\) denote the robot's horizontal and vertical positions. The system follows a double integrator, i.e.,~\(x_{t+1} = A x_t + B u_t\) with  
$
A = \begin{bmatrix} I_2 & I_2 \\ 0 & I_2 \end{bmatrix}, \quad  
B = \begin{bmatrix} 0 \\ I_2 \end{bmatrix},
$
where~\(I_2\) and~\(0\) are the \(2 \times 2\) identity and zero matrices, respectively.

All benchmark scenarios are borrowed from~\cite{Kurtz2022-pe}. The illustration of all four scenarios is described in Figure~\ref{fig;scenarioil}.
The value of the initial state~$x_0$ for each specification is fixed as~$[2.0,2.0,0,0]^{\mathsf{T}}$, 
$[3.0,3.6,0,0]^{\mathsf{T}}$, $[5.0,2.0,0,0]^{\mathsf{T}}$, $[6.0,1.0,0,0]^{\mathsf{T}}$ respectively. The compared methods are summarized below.

\begin{figure*}[t]
 \centering
 \begin{minipage}[b]{0.47\linewidth}
 \centering
 \includegraphics[keepaspectratio, scale=0.5]{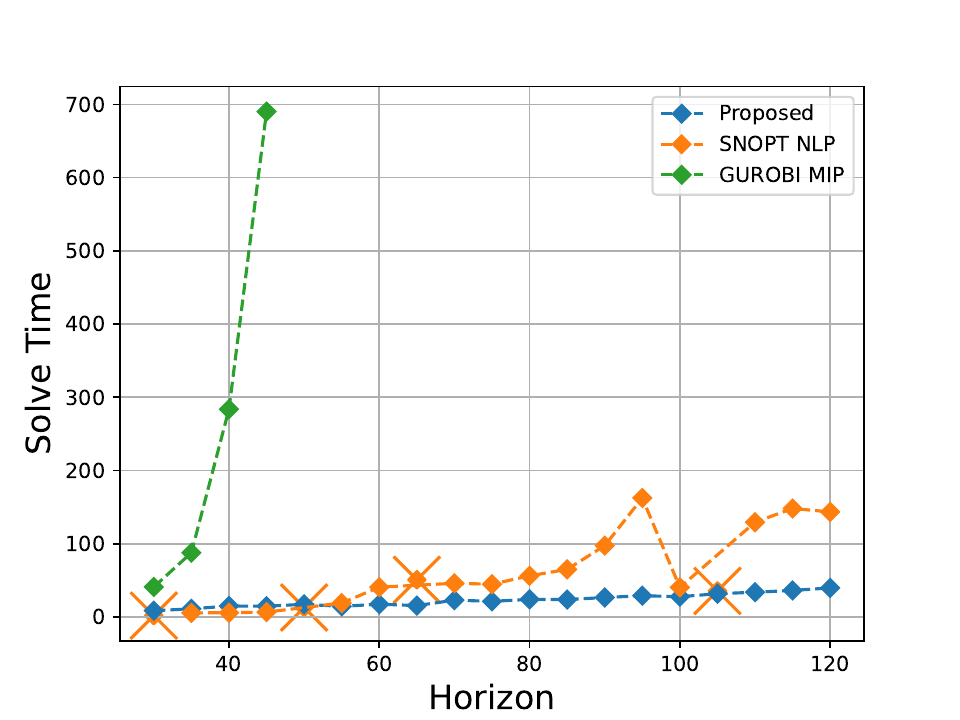}
 \subcaption{Computational time of the three methods.}\label{fig:solvetime_three}
 \end{minipage}
  \centering
 \begin{minipage}[b]{0.47\linewidth}
 \centering
 \mbox{\raisebox{0mm}{\includegraphics[keepaspectratio, scale=0.5]{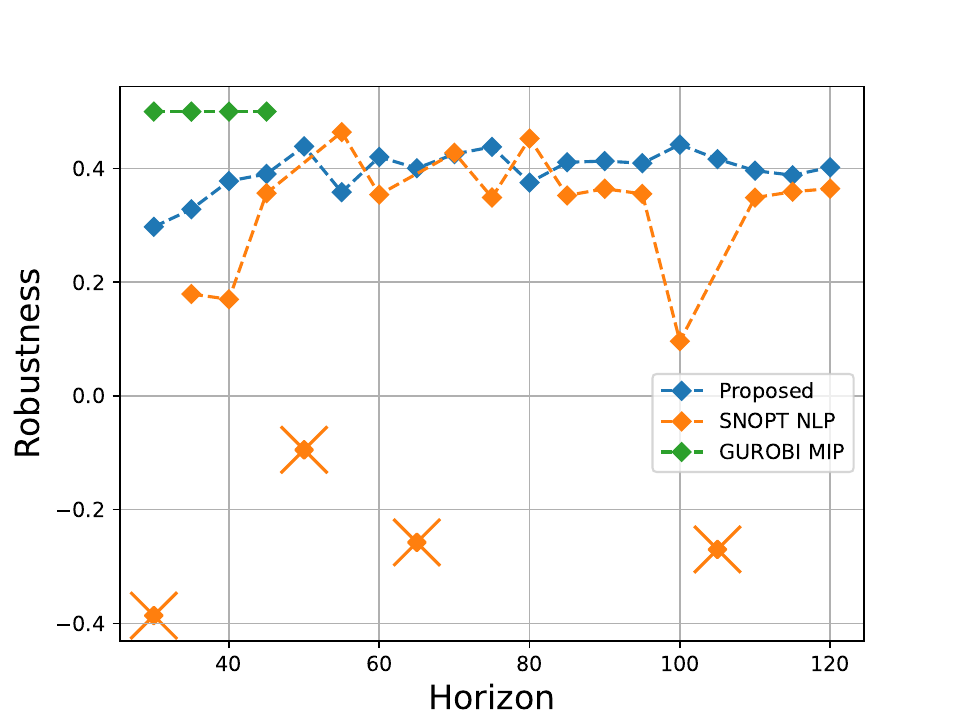}}}
 \subcaption{Robustness values of the three methods.}
 \label{fig:robustness_three}
 \end{minipage}
 \caption{Computational time and robustness score of the three methods in the many-target scenario over Horizon from~$H=30$ to~$120$.}\label{fig:three}
\end{figure*}

\textbf{Proposed approach:} 
We implemented the algorithm in Python using CVXPY~\cite{Diamond2016-ka} as the interface to the optimizer GUROBI (ver.10)~\cite{gurobi} (with the default parallel barrier algorithm).
For the robustness function, the first three experiments in Section~\ref{subsec:diff_horizons}--\ref{subsec:effectiveness_CCP} focus on the LSE smooth approximation~$\overline{\min}_k$ ($k=10$) to demonstrate the validity of the proposed reformulation method. The final subsection highlights the effectiveness of the mellowmin robustness function ($k=1000$). Regarding the optimization step, we use the TWP-CCP unless otherwise noted. To vary the initial guesses of the variables in each trial, we used a normal distribution.
The values of all CCP parameters described in Algorithm~\ref{alg:ccp} are listed in Table~\ref{tab:ccp_para} and remain fixed throughout the numerical experiments. 
\begin{table}[ht]
 \centering
  \caption{Parameter settings of CCP.}\label{tab:ccp_para}
 \begin{tabular}{|l|l|l|l|l|}
  \hline
   Parameters & Description & Value \\\hline \hline 
 ~$\tau_0$ &  Initial value of~$\tau_{(i)}$  & 5e-3 \\\hline  
  ~$\tau_{\max}$ & Maximum allowable value of~$\tau_{(i)}$ & 1e3 \\ \hline
 ~$\nu$ & Rate of increase for~$\tau_{(i)}$ & 2.0
  \\\hline 
 ~$s_{c}$ & Maximum tolerance for penalty variables  & 1e-5 \\\hline
 ~$s_{\text{ep}}$ & Maximum cost difference  & 1e-2 \\
  \hline 
 \end{tabular}
\end{table}

\textbf{MIP-based approach (GUROBI-MIP):} The problem is formulated as an MIP using the encoding framework in~\cite{Belta_undated-jj} and solved the problem with the GUROBI solver. Note that GUROBI is often the fastest MIP solver.

\textbf{SQP-based approach (SNOPT-NLP):} We employ a naive SQP approach with the SNOPT sparse SQP solver~\cite{GilMS05} (default parameter settings) and a sound smoothing method proposed in~\cite{Gilpin2021-wv} where LSE approximation is used for~$\min$ function and Boltzman softmax is used for~$\max$ function. Note that the SNOPT outperforms Scipy's SQP solver (similar arguments can be found in~\cite{Gilpin2021-wv, Kurtz2022-pe}).

\begin{remark}\textit{(Small quadratic cost)}
To demonstrate the effectiveness of the proposed framework in practical scenarios, we introduced a quadratic cost function~$w_q\sum_{t=0}^H (x_t^\mathsf{T} Q x_t+u_t^\mathsf{T} R u_t)$, where~$Q$ and~$R$ are positive semidefinite symmetric matrices and~$w_q$ is a weight parameter. The value of~$w_q$ was set to 0.01 or 0.001.
\end{remark}
\begin{remark}\textit{(A modification for the SNOPT-NLP)}\label{remark:snopt}
The SNOPT-NLP often finds Problem~\ref{problem} infeasible with the robustness constraint~\eqref{eq:min_xuu1}. To enable a fair comparison, we removed this constraint only for the SNOPT-NLP method, making the problem easier to solve.
\end{remark}

\subsection{Comparison over different horizons}\label{subsec:diff_horizons}
In the first experiment, we compared the proposed method with the SQP-based and MIP-based methods on the many-target specification, using time horizons from~$H=30$ to~$120$. The results are shown in Fig.~\ref{fig:three}, which includes computation times (Fig.~\ref{fig:solvetime_three}) and robustness scores (Fig.~\ref{fig:robustness_three}). Note that the SNOPT-NLP method occasionally shows negative robustness values because we omit the robustness constraint~\eqref{eq:min_xuu1} for this method (see Remark~\ref{remark:snopt}). The orange~$\color{orange} \times$ marker in Fig.~\ref{fig:solvetime_three} indicates SNOPT-NLP's failure at certain time horizons, meaning it gets stuck in local solutions with robustness less than~$0$. The blue plots represent the average values of five trials of the proposed method, as the results depend on initial variable values. We omit variance box plots for visibility (variance is shown in Fig.~\ref{fig:ccp}). The results of the other two methods do not change by trial.

The MIP-based method's computation time increases significantly with longer horizons. The proposed method finds satisfactory trajectories for all horizons, while the SQP-based method fails at~$H=30, 50, 65,$ and~$105$ (marked by orange~$\color{orange} \times$). The proposed method's robustness scores consistently exceed zero across all trials, succeeding in~$95$ out of~$95$ experiments. The SNOPT-NLP method succeeds in~$15$ out of~$19$ samples, even without the robustness constraint.

The proposed method's convergence time increases slightly with longer horizons, while the SNOPT-NLP method's performance fluctuates and takes longer to converge for~$H \geq 55$, often with poorer robustness. The SNOPT-NLP method's apparent decrease in computation time at~$H=100, 105$ is due to failures or low robustness values, indicating early optimization stagnation in a local optimum. The proposed method consistently achieves higher robustness scores, close to the global optimum of the MIP-based method. The success of the proposed method is due to the presence of few concave constraints~\eqref{eq:finalmin},~$81$, out of the total~$3579$ constraints in~\eqref{eq:final}.



\begin{figure*}[t]
 \begin{minipage}[b]{{0.243\linewidth}}
 \centering
 \mbox{\raisebox{0mm}{\includegraphics[keepaspectratio, scale=0.3]{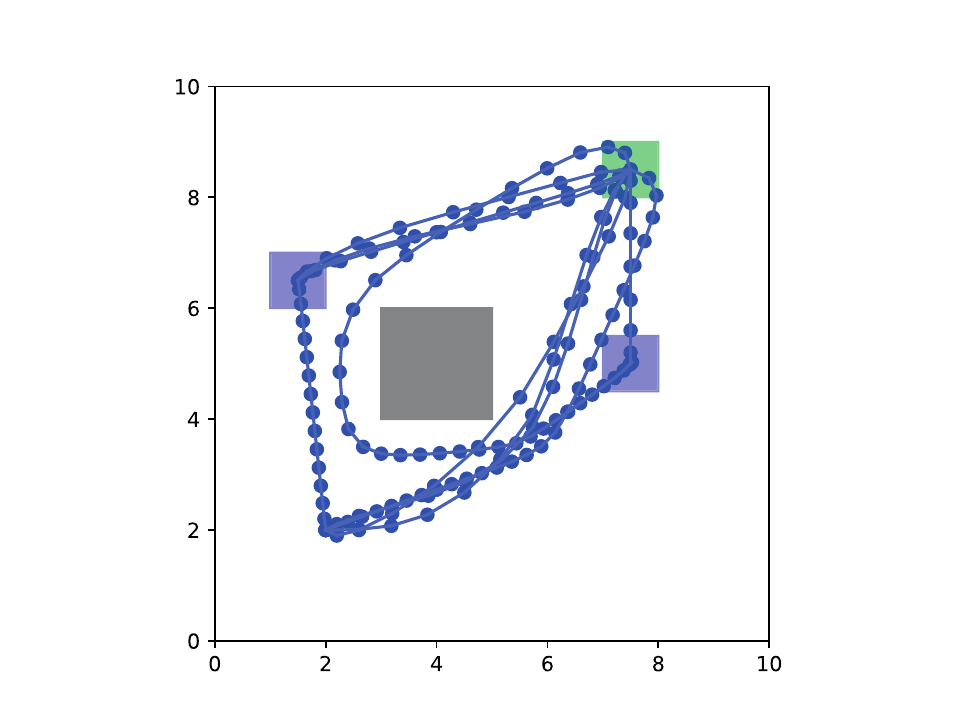}}}
 \subcaption{two-target}\label{either-NLP}
 \end{minipage}
 \begin{minipage}[b]{{0.243\linewidth}}
 \centering
 \includegraphics[keepaspectratio, scale=0.3]{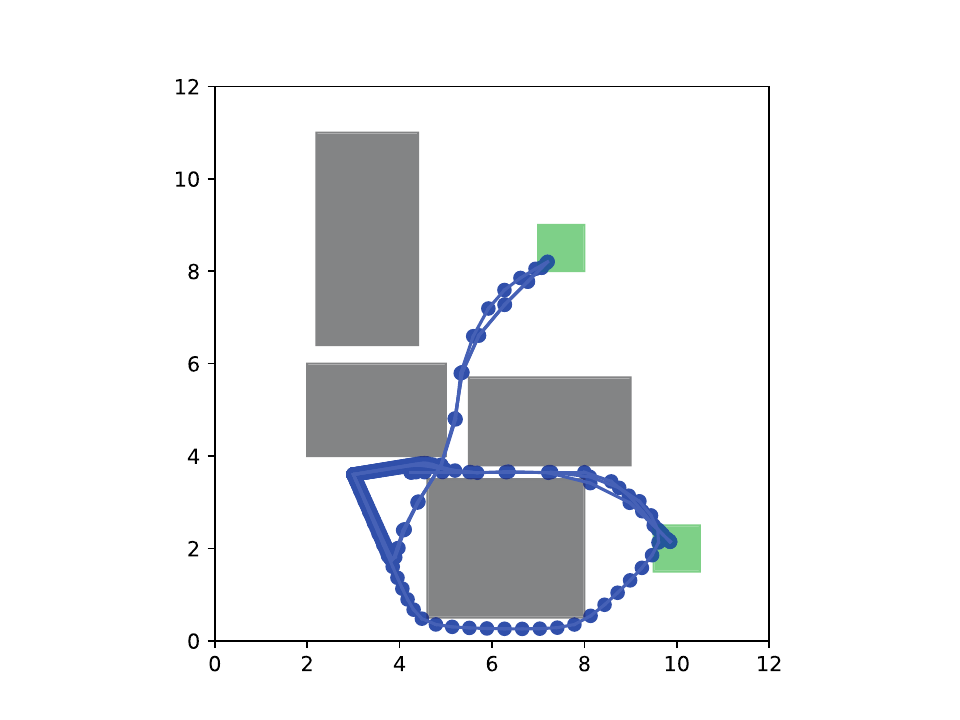}
 \subcaption{narrow-passage}\label{narrow-NLP}
 \end{minipage}
 \begin{minipage}[b]{{0.243\linewidth}}
 \centering
 \includegraphics[keepaspectratio, scale=0.3]{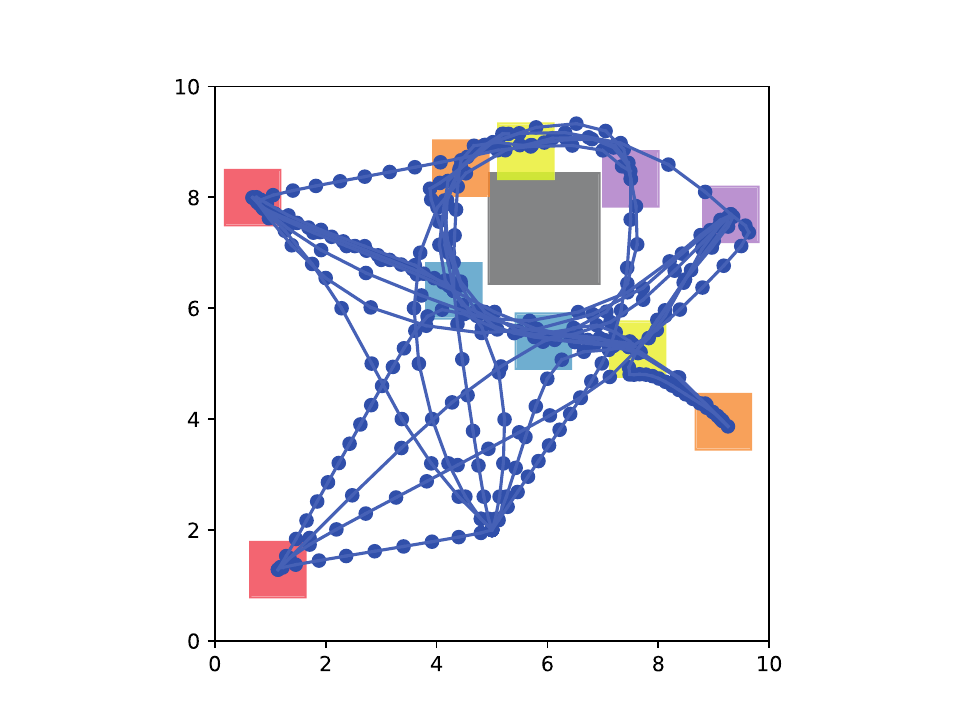}
 \subcaption{many-target}\label{multi-NLP}
 \end{minipage}
  \begin{minipage}[b]{{0.243\linewidth}}
 \centering
 \includegraphics[keepaspectratio, scale=0.3]{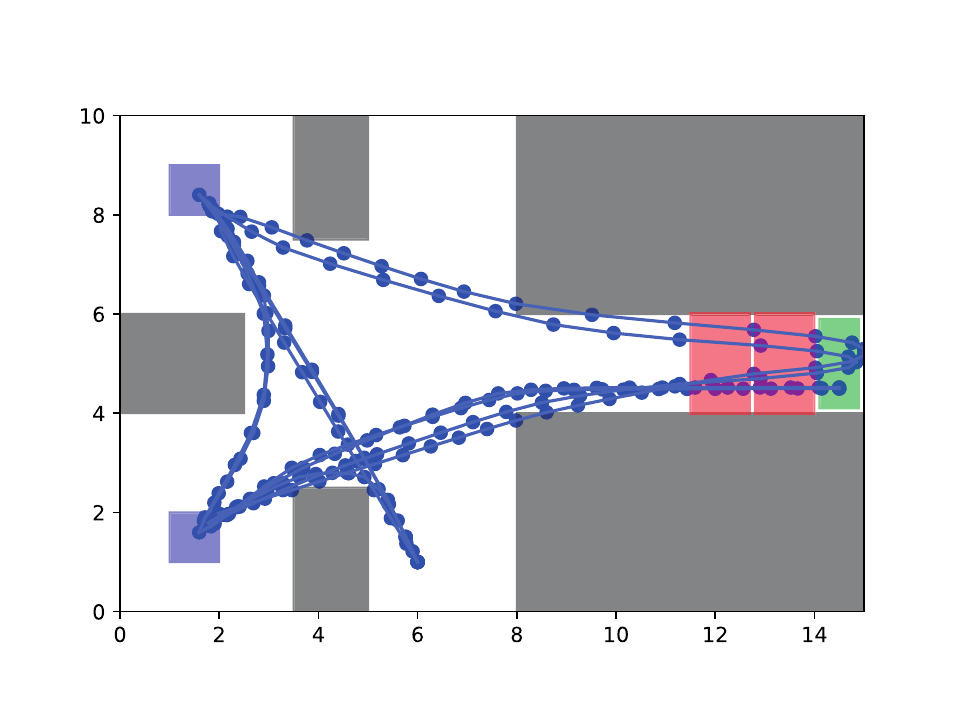}
 \subcaption{door-puzzle}\label{door_NLP}
 \end{minipage}
 \caption{Illustrations of scenarios along with the satisfactory trajectories generated by the proposed method with~$H=50$. The proposed method produced a range of satisfactory trajectories, each dependent on the initial guesses.}\label{fig:various_solutions}
 \normalsize
\end{figure*}

\begin{table*}[ht]
\tiny
 \caption{Solve times and robustness scores for the four different scenarios. Each value for the proposed method represents the average of successful trials out of 10. If the robustness score for the SNOPT-NLP method is negative, it is labeled as ``Failed." If the solve time for GUROBI-MIP exceeds the allotted time of 7500, it is labeled as ``Time out."}
\centering
\resizebox{\textwidth}{!}{
\begin{tabular}{|c|c||c|c|c||c|c|c||c|c|c|} 
\hline
Scenario & Horizon ($H$) & \multicolumn{3}{c||}{Solve Time (s)} & \multicolumn{3}{c||}{Robustness} & \multicolumn{3}{c|}{Success Rate} \\ \hline
~ & ~ & Ours & SNOPT-NLP & MIP & Ours & SNOPT-NLP & MIP & Ours & SNOPT-NLP & MIP \\
\hline \hline
 & 50 & 23.13 & \textbf{15.28} & 15.08 & \textbf{0.494} & 0.367 & 0.500 & 80.0 \% & 100.0 \% & 100.0 \%\\ 
two-target & 75 & \textbf{26.76} & Failed & 193.82 & \textbf{0.500} & Failed & 0.500 & 90.0 \% & Failed & 100.0 \%\\ 
 & 100 & \textbf{37.19} & 57.24 & 337.31 & \textbf{0.500} & 0.462 & 0.500 & 100.0 \% & 100.0 \% & 100.0 \%\\ \hline
 & 50 & \textbf{16.19} & Failed & 2526.99 & \textbf{0.440} & Failed & 0.500 & 100.0 \% & Failed & 100.0 \%\\ 
many-target & 75 & \textbf{21.45} & 44.78 &~$>$7500.00 & \textbf{0.430} & 0.349 & Time Out & 100.0 \% & 100.0 \% & Time Out\\ 
 & 100 & \textbf{27.80} & 40.39 &~$>$7500.00 & \textbf{0.437} & 0.096 & Time Out & 100.0 \% & 100.0 \% & Time Out\\ \hline
 & 50 & \textbf{22.21} & Failed & 36.74 & \textbf{0.151} & Failed & 0.400 & 100.0 \% & Failed & 100.0 \%\\ 
narrow-passage & 75 & 35.18 & \textbf{8.83} &~$>$7500.00 & 0.027 & \textbf{0.151} & Time Out & 80.0 \% & 100.0 \% & Time Out\\ 
 & 100 & 57.41 & \textbf{39.94} &~$>$7500.00 & 0.111 & \textbf{0.170} & Time Out & 100.0 \% & 100.0 \% & Time Out\\ \hline
 & 50 & \textbf{299.33} & Failed &~$>$7500.00 & \textbf{0.348} & Failed & Time Out & 30.0 \% & Failed & Time Out\\ 
door-puzzle & 75 & Failed & Failed &~$>$7500.00 & Failed & Failed & Time Out &~$<$ 5.0 \% & Failed & Time Out\\ 
 & 100 & Failed & Failed &~$>$7500.00 & Failed & Failed & Time Out &~$<$ 5.0 \% & Failed & Time Out\\ \hline
\end{tabular}
}
\label{fig:combined_table} 
\end{table*}

\subsection{Comparison over different specifications}\label{subsec:diff_spec}

The next experiment compared the three methods across four scenarios: two-target, narrow-passage, many-target, and door-puzzle. Table~\ref{fig:combined_table} shows the convergence times, robustness scores, and success rates. The proposed method consistently finds satisfactory trajectories for all horizons in the two-target, many-target, and narrow-passage scenarios, outperforming the others. It also succeeds in the door-puzzle scenario with~$H=50$. The proposed method's success rate is above 90\% in the two-target and narrow-passage scenarios and 100\% in the many-target scenario, showing robustness against initial variable guesses. In the door-puzzle scenario, the success rate is 30\%, which is still notable for these challenging scenarios. The proposed method also occasionally succeeded for~$H=75$ and~$H=100$, but we excluded these cases due to their low success rates.

The proposed method is the fastest in most horizons for the two-target and many-target scenarios, with slight increases in convergence time as the horizon lengthens. The SNOPT-NLP method occasionally performs better in the narrow-passage scenario but exhibits high volatility and random failures. 
The proposed method's robustness scores are consistently satisfactory (not only the averages of ten trials given different initial guesses), unlike the SNOPT-NLP method's volatile performance. Fig.~\ref{fig:various_solutions} shows final trajectories generated by the proposed method with different initial guesses for all scenarios with~$H=50$, demonstrating that the proposed method is robust against initial guesses.

The proposed method's relatively poorer performance in the narrow-passage scenario is likely due to the narrow path and problem relaxation by penalty variables. The CCP’s majorization targets only the goal specification~$\Diamond_{[0, H]}\left(G_1 \vee G_2\right)$ (with a disjunctive node that unifies~$\Diamond_{[0, H]}$ and~$\vee$), not obstacle avoidance~$\square_{[0, H]}\left(\bigwedge_{i=1}^4 \neg O_i\right)$. 
Although the relaxation of the goal specification itself is unlikely the primary issue (supported by the effectiveness of the proposed method in the many-target scenario, which involves even more disjunctive goal nodes), this relaxation, combined with the obstacle avoidance specification, slows down the convergence of the solution. For instance, a trajectory passing through the center of obstacles can be generated easily with the narrow feasible region because of the large penalty variables allowed in the early optimization steps. These violations are then gradually reduced by incrementally increasing the weight of penalty variables. Guided by the robustness function, the trajectory iteratively converges toward the narrow path (i.e., the feasible region). However, this process either significantly increases computational time or fails to reach the narrow path. Adjusting parameters could improve performance, for instance, by increasing the penalty variable weights to achieve faster convergence and tightening the termination conditions to improve robustness scores. Exploring these optimal parameter settings remains an interesting area for future research.

\subsection{Effectiveness of the TWP-CCP}\label{subsec:effectiveness_CCP}

In the third experiment, we compared three methods: standard CCP, TWP-CCP, and a variant called \textit{TWP-CCP with decay}. The decay variant starts as TWP-CCP and exponentially approaches standard CCP (see Appendix~\ref{app:detailed} for details). We evaluated these methods in the many-target scenario with~\mbox{$H=75$}, using 20 random initial guesses. The results, shown in Figs.~\ref{fig:box_solvetime} and~\ref{fig:box_robustness}, indicate that both TWP-CCP methods achieve consistently high robustness scores, independent of the initial guesses, and almost halve the convergence times compared to standard CCP on average.

This suggests that incorporating priority importance among tree nodes significantly enhances algorithm stability and reduces dependence on initial guesses. This was confirmed by analyzing variable changes in standard CCP during optimization. Failures in standard CCP often resulted from high values of variables associated with important constraints, leading to local optima. Additionally, the comparable performance of the decay method to TWP-CCP highlights the importance of incorporating priority ranking in the early stages of the optimization process. Significant violations of important constraints in the early stages, resulting from problem relaxation, can increase the likelihood of getting stuck in a local minimum outside the feasible region during subsequent iterations.

 \begin{figure*}[t]
 \begin{minipage}[b]{0.5\linewidth}
 \centering
 \includegraphics[keepaspectratio, scale=0.5]{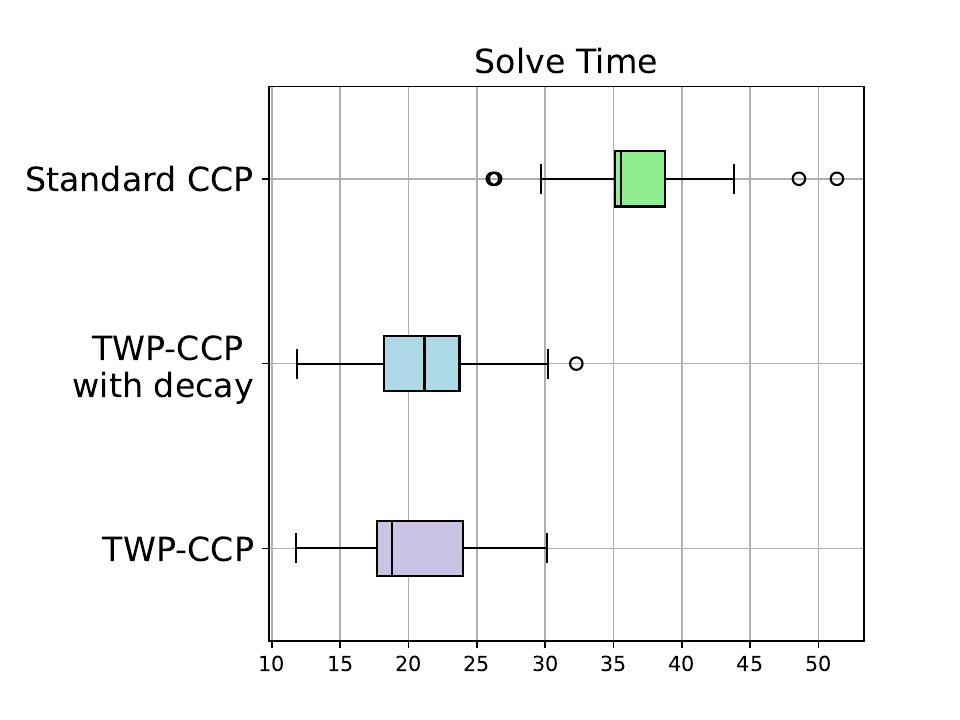}
 \subcaption{Computational time}\label{fig:box_solvetime}
 \end{minipage}
 \begin{minipage}[b]{0.5\linewidth}
 \centering
 \includegraphics[keepaspectratio, scale=0.5]{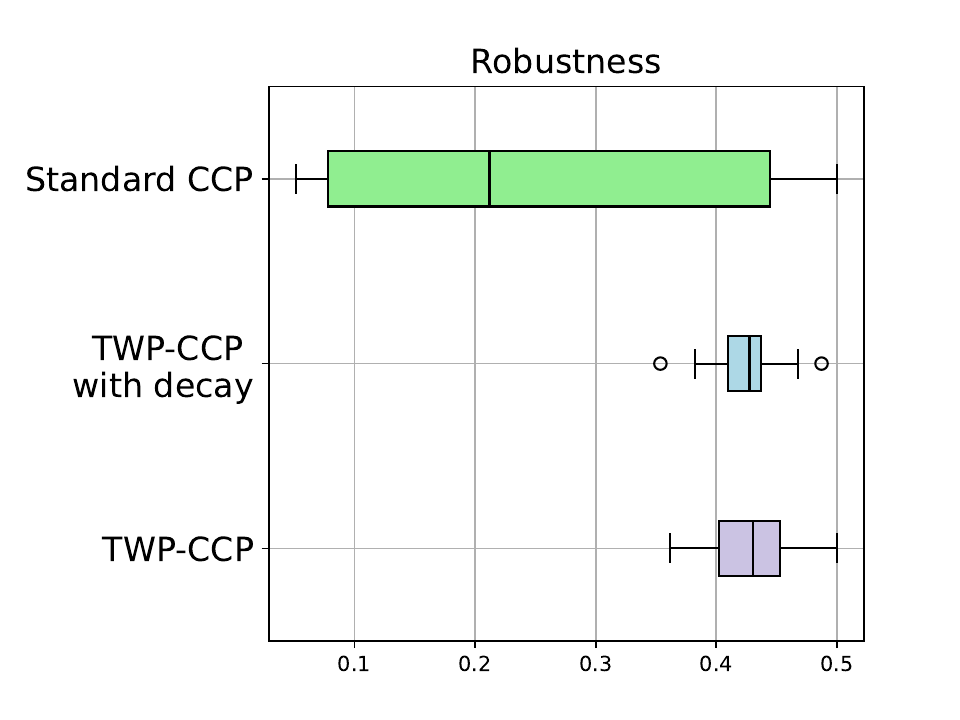}
 \subcaption{Robustness}\label{fig:box_robustness}
 \end{minipage}
 \caption{Box plots of 20 samples for each method for the many-target scenario with~$H=75$. }\label{fig:ccp}
\end{figure*}

\subsection{Effectiveness of the mellowmin smoothing}\label{subsec:sum_experiments}

Across all experiments above, we used LSE smoothing with~$k=10$ to demonstrate the validity of the proposed framework. However, LSE smoothing lacks desirable properties like soundness, which mellowmin smoothing provides. Solving the mellowmin smoothed program~$\widetilde{\mathcal{P}}_{\text{DC}}$ solely often results in poorer solutions because the smooth parameter~$k$ should be set high (e.g., 1000) to accurately approximate the true~$\min$ function, while~$k=10$ is sufficient for the LSE smoothing case. This high value can make the function more prone to being stuck in a local minimum as this high value of~$k$ may make the function more likely to be affected by only one argument with the critical value, similar to the true~$\min$ function (see Section~\ref{subsec:other_smooth}).
Therefore, we propose using the solution from LSE smoothing as a warm start for the mellowmin smoothed program. This approach achieves higher robustness scores while guaranteeing formula satisfaction.

Fig.~\ref{fig:repeat} compares LSE smoothing and mellowmin smoothing using the LSE solution as a warm start in the many-target specification. The additional time to solve the mellowmin smoothed program is shown in Fig.~\ref{fig:box_solvetime_repeat}. The average computational time remains low, approximately 10, even as the planning horizon increases. 
Despite the low computational time, the robustness score improves, occasionally achieving the global optimum of~$0.5$. This is likely because a solution from the LSE-smoothed program is often already a good trajectory. It is worth noting that the mellowmin smoothing approach always finds a solution with the fixed parameter~$k=1000$ without any modification.
We also validated this warm-start approach in the door-puzzle scenario. By repeating the warm-started mellowmin approach, we observed an increase in the success rate over the rate with the LSE smoothing approach in Section~\ref{subsec:diff_spec} (for further details, please refer to the data provided in the GitHub link).

 \begin{figure*}[!ht]
   \begin{minipage}[b]{0.5\linewidth}
 \centering
 \includegraphics[keepaspectratio, scale=0.5]{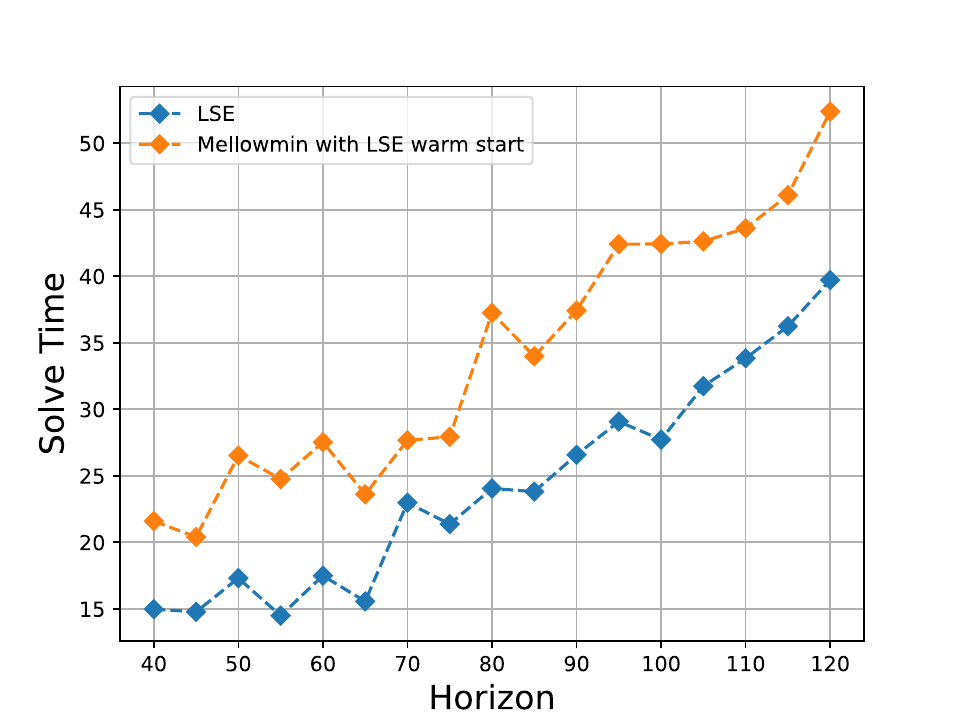}
 \subcaption{Solve Time}\label{fig:box_solvetime_repeat}
 \end{minipage}
 \begin{minipage}[b]{0.5\linewidth}
 \centering
 \includegraphics[keepaspectratio, scale=0.5]{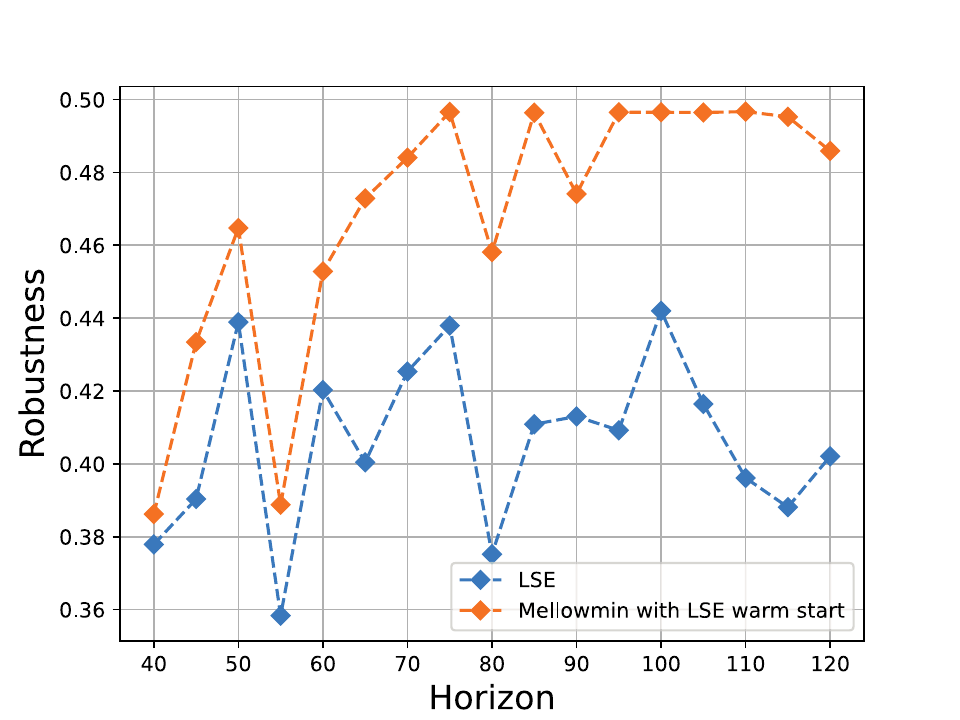}
 \subcaption{Robustness}\label{fig:box_robustness_repeat}
 \end{minipage}
 \caption{Computational time and robustness score of the two methods in the many-target scenario over Horizon from~$H=40$ to~$120$. The orange plot additionally solves the mellowmin smoothed program \protect{$\widetilde{\mathcal{P}}_{\text{DC}}$} using the solution of the blue plots.}\label{fig:repeat}
\end{figure*}

\section{Discussion and Conclusion}\label{section:conclusion}

\subsection{Discussion of key properties}\label{subsec:discuss}

The strength of the proposed framework lies in exploiting key properties of STL specifications. Subsequently, we elaborate on how these properties are utilized.

\textbf{Conjunctive-disjunctive to convex-concave:} We formally established a \textit{correspondence} between the conjunctiveness/disjunctiveness of the temporal operators and the convexity/concavity of the optimization program. Thus far, we first introduced the reversed version of the robustness function in Section~\ref{subsec:stl} to associate all the logical operators, described using \textit{conjunctive} or \textit{disjunctive} terms, with the~$\max$ or~$\min$ functions of the robustness function. This enabled us to reformulate the problem into a DC program by mapping conjunctive operators to convex parts of the program (and disjunctive operators to concave parts) in Section~\ref{section:decomposition} (in particular, the one-to-one correspondence in Proposition~\ref{prop:onetoone}).

\textbf{Monotonicity property:} The monotonicity property of the robustness function was required to prove Theorems~\ref{theo:satisfy} and~\ref{theo:sameoptimal} to make the robustness transformation equivalent. Additionally, the strict monotonicity of the mellowmin robustness function aided in finding a better optimal solution as in Section~\ref{subsec:smooth}.

\textbf{Hierarchical tree structure:} The TWP-CCP in Section~\ref{subsec:twp-ccp} used the hierarchical priority ranking of constraints. Unlike previous work, the proposed robustness decomposition enabled a more nuanced differentiation of node importance, resulting in a detailed method for prioritizing constraints.

Interpreting the robustness function as a tree was fundamental to several aspects of this paper. It facilitated the hierarchical prioritization described above, while also supporting key analyses. For example, the definition of the robustness tree structure in Section~\ref{subsec:tree_structure} was essential for characterizing the assumptions required for the proposed approach (e.g., Assumption~\ref{assum:dcpredicate}). Additionally, this tree perspective was critical to the proof of Theorem~\ref{theo:sameoptimal} and provided the basis for the tree simplification technique introduced in Section~\ref{subsection:simplification}, which minimizes concave constraints.

\begin{figure}[!t]
    \centering
    \includegraphics[keepaspectratio, scale=0.42]{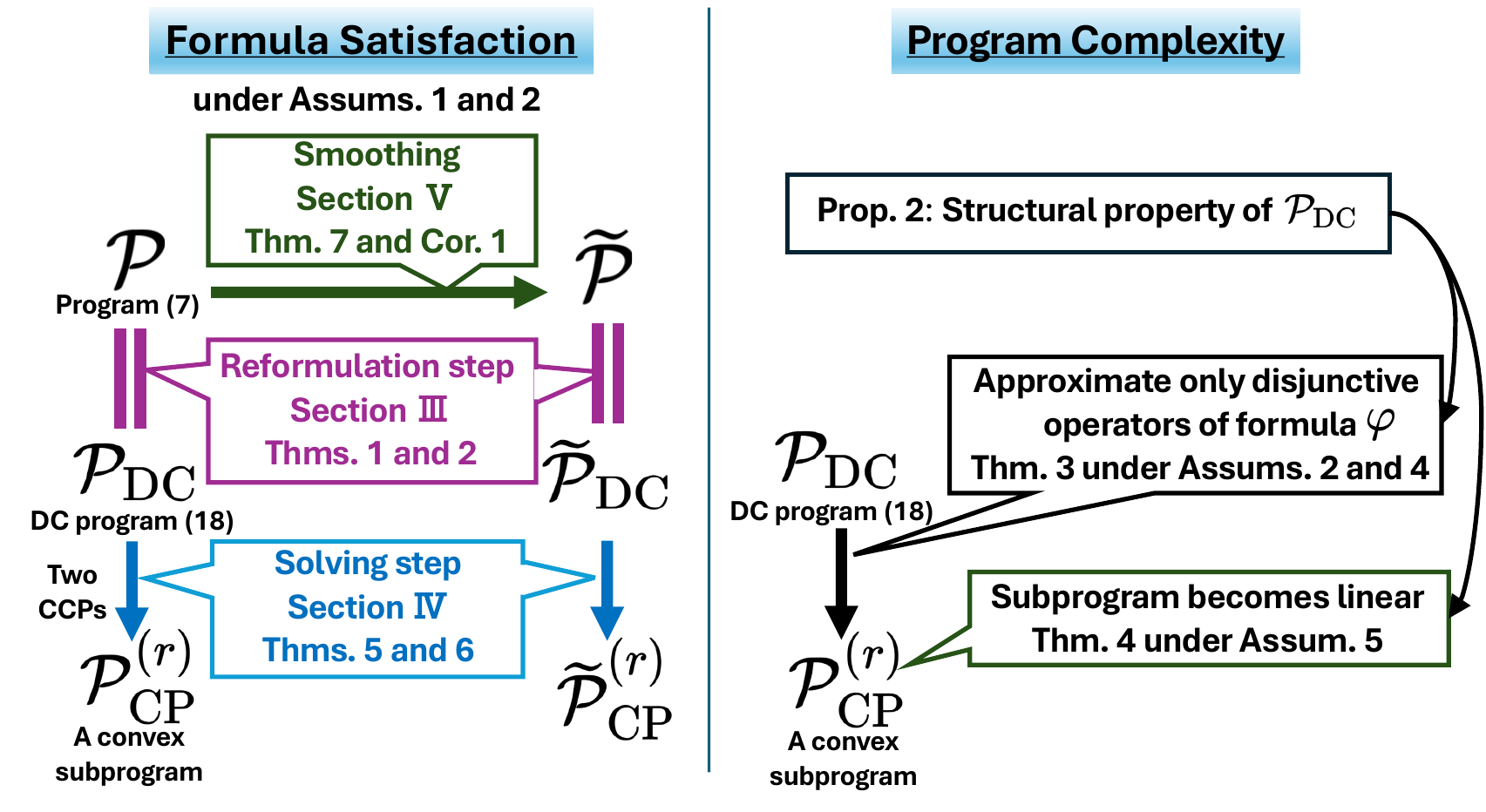}
    \caption{Summary of main results in this paper}
    \label{fig:overview}
\end{figure}

\subsection{Conclusion}\label{subsection:conclu}
This study established a connection between STL and CCP and leveraged this connection to propose an efficient optimization framework. 
An overview of the main results is provided in Fig.~\ref{fig:overview}. The left-hand side of the figure illustrates how the proposed structure-aware decomposition of STL formulas enables encoding control problems~$\mathcal{P}$ into a specific form of DC program~$\mathcal{P}_{\text{DC}}$. This reformulation is equivalent to the original problem and guarantees formula satisfaction, i.e., $\boldsymbol{x} \vDash \varphi$, under Assumptions~\ref{assum:dc}--\ref{assum:dcpredicate}. Furthermore, both convex subprograms~$\mathcal{P}_{\text{CP}}$ and~$\mathcal{P}_{\text{CP}}^r$, after undergoing majorization in the standard CCP and TWP-CCP settings, respectively, also guarantee formula satisfaction.
The right-hand side of Fig.~\ref{fig:overview} highlights the complexity analysis, showing that under specific assumptions (Assumptions~\ref{assum:major}--\ref{assum:linear}), subprogram~$\mathcal{P}_{\text{CP}}$ simplifies to a linear program by approximating only~$N_\vee^\varphi$ concave constraints related to the disjunctive nodes in the STL tree.
To enhance the efficiency of the algorithm with gradient-based solvers, we introduced the mellowmin-based approximation~$\widetilde{\mathcal{P}}$, which is also compatible with the theorems established above. In collaboration, these contributions form the STLCCP framework, which demonstrated scalability and robustness in synthesizing control strategies for motion planning problems. Future work will focus on extending the framework to various applications beyond controller synthesis, including problems involving time delays and real-time uncertainties.

\bibliographystyle{IEEEtran}

\newpage

\section{Appendix}
\subsection{Proof of Lemma~\ref{lem:mellowmin_over}}\label{app:overapprox}
The difference between the mellowmax function~$\mathrm{mm}_k$ and the~$\max$ function is given by
\begin{equation}
\mathrm{mm}_k(a) - \max(a) = \frac{-\log(r) + \log\left( \sum_{i=1}^r e^{k \left( a_i - \max(a) \right)} \right)}{k}, \nonumber \label{eq:mm_max_eq}
\end{equation}
where~$a = (a_1, \dots, a_r)$, $a_i\in\mathbb{R}$, and~$k \in \mathbb{R}_{>0}$. We define the number of arguments equal to the maximum value among the arguments as $r_m := \left| \{ i \in \{1, \dots, r\} : a_i = \max(a) \} \right|$, where~$1 \leq r_m \leq r$. The term~$\log\left( \sum_{i=1}^r e^{k \left( a_i - \max(a) \right)} \right)$ attains its maximum value~$\log(r)$ when each argument of vector~$a$ is equal, i.e., when~$r_m = r$. This results in the inequality
$\mathrm{mm}_k(a) - \max(a) \leq 0$.
Conversely, when~$r_m \neq r$,~$k \left( a_i - \max(a) \right) \leq 0$ for~$i \in \{1, \dots, r\}$ and at least one exponent is strictly less than 0. Since the exponential function is always positive, we have~$\mathrm{mm}_k(a) - \max(a) \geq -\frac{\log(r)}{k}$. By taking the negative of the entire expression and negating each argument, we get~$0 \leq -\mathrm{mm}_k(-a) + \max(-a) \leq \frac{\log(r)}{k}$.
Hence, the mellowmin function is an over-approximation of the true~$\min$, and the statement holds.

\subsection{Proof of Theorem~\ref{theo:complete}} \label{app:theo_complete}
Recall from Lemma~\ref{lem:mellowmin_over} that for~$k \geq \frac{\log (r)}{\epsilon}$, we have
\begin{equation}\label{eq;minimini}
|\min (a)-\widetilde{\min }(a)| \leq \epsilon .
\end{equation}
By applying \eqref{eq;minimini} to the definition of~$\widetilde{\rho}_{\text{rev}}^{\varphi,k}$~\eqref{eq:robustnessrev}, we have 
\begin{itemize}
  \item~$\left|\widetilde{\rho}_{\text{rev}}^\mu(\boldsymbol{x}, t)-\rho_{\text{rev}}^\mu(\boldsymbol{x}, t)\right|=0$
\item~$\left|\widetilde{\rho}_{\text{rev}}^{\varphi_1 \wedge \varphi_2}(\boldsymbol{x}, t)-\rho_{\text{rev}}^{\varphi_1 \wedge \varphi_2}(\boldsymbol{x}, t)\right| \leq 0$ 
\item~$\left|\widetilde{\rho}_{\text{rev}}^{\varphi_1 \vee \varphi_2}(\boldsymbol{x}, t)-\rho_{\text{rev}}^{\varphi_1 \vee \varphi_2}(\boldsymbol{x}, t)\right| \leq \epsilon$
\item~$\left|\widetilde{\rho}_{\text{rev}}^{ \square_{\left[t_1, t_2\right]} \varphi}(\boldsymbol{x}, t)-\rho_{\text{rev}}^{ \square_{\left[t_1, t_2\right]} \varphi}(\boldsymbol{x}, t)\right| \leq 0$
\item~$\left|\widetilde{\rho}_{\text{rev}}^{ \Diamond_{\left[t_1, t_2\right]} \varphi}(\boldsymbol{x}, t)-\rho_{\text{rev}}^{ \Diamond_{\left[t_1, t_2\right]} \varphi}(\boldsymbol{x}, t)\right| \leq \epsilon$
\item~$\left|\widetilde{\rho}_{\text{rev}}^{\varphi_1 \boldsymbol{U}_{\left[t_1, t_2\right]} \varphi_2}(\boldsymbol{x}, t)-\rho_{\text{rev}}^{\varphi_1\boldsymbol{U}_{\left[t_1, t_2\right]} \varphi_2}(\boldsymbol{x}, t)\right| \leq \epsilon$.
\end{itemize}
By induction, the statement follows.

\subsection{Parameter settings in Section~\ref{subsec:effectiveness_CCP}}\label{app:detailed}
\textbf{TWP-CCP with decay}: We adopt the TWP-CCP in Algorithm~\ref{alg:ccp}, using a modified cost term:
\begin{equation}\label{eq:expdecay}
    \tau_{(i)} \sum_{j=1}^{N^\varphi_{\vee}} 
    \Big(N_p^{\varphi_j} 
        - \min_{j}(N_p^{\varphi_j}) \exp(-e(i-1)) 
        + \min_{j}(N_p^{\varphi_j}) 
    \Big) s_j,\nonumber
\end{equation}
where~$i$ denotes the iteration step, and~$e$ is a parameter that controls the rate of exponential decay, set to~$0.2$ in this case. The term~$\min_{j}\big(N_p^{\varphi_j}\big)$ denotes the smallest number of leaves among the subtrees~$\mathcal{T}^{\varphi_j}$ of concave constraints (e.g., 4 in the multi-target scenario with~$H = 75$).

\textbf{Standard CCP}: We adopt the penalty CCP~\cite{Lipp2016-fa}, using the cost term:
$
\tau_{(i)} \min_{j}(N_p^{\varphi_j}) \sum_{j=1}^{N^\varphi_{\vee}} s_j,\nonumber
$
where the penalty weight is fixed to~$\min_{j}(N_p^{\varphi_j})$ for all $N^\varphi_{\vee}$ concave constraints. 

\end{document}